\definecolor{mygray}{gray}{.9}
\newcommand{\Ba}{{\boldsymbol{a}}}
\newcommand{\Bb}{{\boldsymbol{b}}}
\newcommand{\Bc}{{\boldsymbol{c}}}
\newcommand{\Bj}{{\boldsymbol{j}}}
\newcommand{\Bn}{{\boldsymbol{n}}}
\newcommand{\Bu}{{\boldsymbol{u}}}
\newcommand{\Bv}{{\boldsymbol{v}}}
\newcommand{\Bw}{{\boldsymbol{w}}}
\newcommand{\Bx}{{\boldsymbol{x}}}
\newcommand{\BB}{{\boldsymbol{B}}}
\newcommand{\BC}{{\boldsymbol{C}}}
\newcommand{\BD}{{\boldsymbol{D}}}
\newcommand{\BE}{{\boldsymbol{E}}}
\newcommand{\BF}{{\boldsymbol{F}}}
\newcommand{\BH}{{\boldsymbol{H}}}
\newcommand{\BJ}{{\boldsymbol{J}}}
\newcommand{\BM}{{\boldsymbol{M}}}
\newcommand{\BN}{{\boldsymbol{N}}}
\newcommand{\BP}{{\boldsymbol{P}}}
\newcommand{\FJ}{\mathfrak{J}}
\newcommand{\Cb}{\mathcal{B}}
\newcommand{\Cc}{\mathcal{C}}
\newcommand{\Ce}{\mathcal{E}}
\newcommand{\Cf}{\mathcal{F}}
\newcommand{\Cj}{\mathcal{J}}
\newcommand{\Cm}{\mathcal{M}}
\newcommand{\bbB}{\mathbb{B}}
\newcommand{\bbC}{\mathbb{C}}
\newcommand{\bbR}{\mathbb{R}}
\newcommand{\bbP}{\mathbb{P}}
\newcommand{\bbQ}{\mathbb{Q}}
\newcommand{\bbV}{\mathbb{V}}
\newcommand{\D}{\mathrm{d}}
\newcommand{\ket}[1]{| #1 \rangle} 
\newcommand{\bra}[1]{\langle #1 |} 
\newcommand{\vect}{\boldsymbol}
\newcounter{parentalgorithm}
\newtheorem{theorem}{Theorem}[section]
\newtheorem{lemma}{Lemma}[section]
\theoremstyle{remark}
\theoremstyle{assumption}
\numberwithin{equation}{section}
\begin{document}

	\title{Quantum simulation of Maxwell's equations via Schr$\ddot{\text{o}}$dingerisation}

\author[1,2,3,5]{Shi Jin   \thanks{shijin-m@sjtu.edu.cn}}
\author[2,3,4,5]{Nana Liu   \thanks{nana.liu@quantumlah.org}}
\author[1]{Chuwen Ma   \footnote{Corresponding author.} \thanks{chuwenii@sjtu.edu.cn}}
	\affil[1]{School of Mathematical Sciences,   Shanghai Jiao Tong University, Shanghai 200240, China.}
	\affil[2]{Institute of Natural Sciences, Shanghai Jiao Tong University, Shanghai 200240, China.}
    \affil[3]{Ministry of Education, Key Laboratory in Scientific and Engineering Computing, Shanghai Jiao Tong University, Shanghai 200240, China.}
  \affil[4]{University of Michigan-Shanghai Jiao Tong University Joint Institute, Shanghai 200240, China.}  
\affil[5]{Shanghai Artificial Intelligence Laboratory, Shanghai, China}

\date{}

	\maketitle
	\begin{abstract}
We present  quantum algorithms for electromagnetic fields governed by Maxwell's equations. The algorithms are based on the Schr\"odingerisation approach, which transforms any linear PDEs and ODEs with non-unitary dynamics into a system evolving under unitary dynamics, via a warped phase transformation that maps the equation into one higher dimension. In this paper, our quantum algorithms are based on either a direct approximation of Maxwell's equations combined with Yee's algorithm, or a matrix representation in terms of Riemann-Silberstein vectors combined with a spectral approach and an upwind scheme. We implement these algorithms with physical boundary conditions, including perfect conductor and impedance boundaries. We also solve Maxwell's equations for a linear inhomogeneous medium, specifically the interface problem. Several numerical experiments are performed to demonstrate the validity of this approach. In addition, instead of qubits, the quantum algorithms can also be formulated in the continuous variable quantum framework, which allows the quantum simulation of Maxwell's equations in analog quantum simulation.  
\end{abstract}

\textbf{Keywords}: 
	Maxwell's equations, quantum algorithm, Schr$\ddot{\text{o}}$dingerisation method,
	boundary and interface conditions, continuous-variable  quantum system

\section{Introduction}
 It has been pointed out in \cite{Fre69} that classical computing technology, which has been developing rapidly for more than half a century, could soon reach its limit as prescribed by the laws of physics. In 1982, Feynman \cite{Feyman82} proposed a new type of computer called a quantum computer that could simulate the physical world more efficiently than conventional computers.
     Deutsch provided a quantum-mechanical model of the theory of quantum computing \cite{Dutsch85}, and it has been shown that quantum computers could potentially outperform the most powerful classical computers  for certain types of problems \cite{Ekert98,Nielssen2000}.
     Today, quantum computing is considered a promising candidate to overcome the limitations of classical computing\cite{Divi95,Shor94,Stea98}.
     
    The application of quantum algorithms to Maxwell's equations has already been discussed in the literature, which can be classified into two main approaches. One is to solve the linear system of equations using the Harrow-Hassidim-Lloyd (HHL) algorithm, which 
    is combined with finite element method (FEM) \cite{cla13,Zhang21}, or Methods of Moments \cite{cai21}.
    The other is to rewrite the source-free Maxwell formulation into a Hamiltonian system based on the Riemann-Silberstein vectors for simulation \cite{costa19,suau21,vaha2020,vaha2020b,va20,va202,Bui2022}. 
    Since the Hamiltonian results from a tensor product of Pauli matrices, 
    it is easy to implement the quantum lattice algorithm (QLA) \cite{vaha2020,vaha2020b,va20,va202}, following 
     \cite{Yepe02}.
    Despite the intensive research on quantum computing in recent years, study on the quantum simulation of electromagnetic models with physical boundary conditions and complex medium are scarce in the literature.

    In the present paper, we investigate the use of another method -- the Schr\"odingeri-sation 
    method -- which opens many new opportunities for quantum simulation of complex physical systems, in both qubit \cite{JLY22a,JLY22b} and continuous-variable frameworks \cite{CVPDE2023}. The main idea is to convert 
    linear partial differential equations (PDEs) or  ordinary differential equations (ODEs) with non-unitary dynamics to a system of Schr$\ddot{\text{o}}$dinger type equations -- with unitary dynamics -- by a warped phase transformation. This transformation converts the non-unitary dynamics into a unitary one by introducing one auxiliary space-like dimension. 
     The method can also be extended to solve open quantum systems in a bounded domain with artificial boundary conditions \cite{JLLY22}, and physical and interface conditions \cite{JLLY23}.

     Our algorithm is an extension of the Schr$\ddot{\text{o}}$dingerisation method applied to Maxwell's equations with suitable boundary conditions. The main contributions are the following.
     \begin{enumerate}
     	\item [(a)] We apply the Schr$\ddot{\text{o}}$dingerisation approach combined with the spectral method or the upwind scheme to the eight-dimensional matrix representation of Maxwell's equations.  We then construct a quantum algorithm for Yee's scheme \cite{TafYee00,Cai13} to simulate the original electromagnetic equations. Yee's algorithm is the most popular algorithm for numerically approximating Maxwell's equations, due to its simplicity and preservation of the continuous vector identities on the discrete grid.
     	We compare these two methods theoretically and numerically. We then analyze the influence of Schr$\ddot{\text{o}}$dingerisation  on the traditional numerical methods for preserving the divergence-free condition of the magnetic field and the total energy of the electromagnetic field. 
      
     	\item [(b)] We convert the boundary and interface conditions of the original electromagnetic model into the conditions of the matrix representation of Maxwell's equations via a unitary transformation. We then give the implementation details of the application of the Schr$\ddot{\text{o}}$dingerisation method to complex boundary value problems and interface problems for Maxwell's equations.
       
     	\item [(c)] We apply the  Schr$\ddot{\text{o}}$dingerisation method 
     	to continuous-variable quantum systems introduced in \cite{CVPDE2023} to 
     	solve the  Maxwell equations. This approach avoids the dense Hamiltonian matrix caused by the discretisation of the velocity varying with space, and allows one to use analog quantum computing to solve the system, which may be more accessible for intermediate-term devices. 
     \end{enumerate} 
     
     The rest of the paper is organized as follows. In Section~\ref{sec:A brief review of Maxwell equations}, we give a brief review of Maxwell‘s equations.
     In Section~\ref{sec:schr}, we review the Schr$\ddot{\text{o}}$dingerisation
     approach for general linear systems. In Section~\ref{sec:schr periodic BD} and \ref{sec:schr physical boundary conditions}, we give implementation details for the three boundary conditions, including periodic, perfect conductor and impedance boundary conditions \cite{ACL18}.
     In Section~\ref{sec:schr inhomogeneous medium}, we apply the Schr$\ddot{\text{o}}$dingerisation method to interface problems.
     In Section~\ref{sec:continuous-variable formulation}, we show the Schr\"odingerisation framework in the continuous-variable representation to simulate Maxwell's equations.
     Finally, we show the numerical tests in Section ~\ref{sec:numerical simulation}.

	Throughout the paper, we restrict the simulation to a finite time interval $t\in [0,T]$, and we use a 0-based indexing, i.e. $j= \{0,1,\cdots,N-1\}$, or $j\in [N]$, and  $\ket{j}\in \bbC^N$, denotes a vector with the $j$-th component being $1$ and others $0$.
	We shall denote the identity matrix and null matrix by $\textbf{1}$ and $\textbf{0}$, respectively,
	and the dimensions of these matrices should be clear from the context, otherwise,
	the notation $\textbf{1}_N$ stands for the $N$-dimensional identity matrix.

	\section{A brief review of Maxwell's equations} \label{sec:A brief review of Maxwell equations}
  
	We consider the Maxwell equations for a medium, in presence of sources of charge $\rho$ and currents $\BJ$,
	\begin{equation}\label{eq:maxwell linear medium}
		\begin{aligned}
			\frac{\partial \BD}{\partial t}- \nabla \times \BH &=-\BJ,\qquad
			\frac{\partial \BB}{\partial t}+ \nabla \times \BE  =0,\\
			\nabla \cdot \BB &=0, \qquad \quad 
			\nabla \cdot \BD  = \rho,
		\end{aligned}
	\end{equation} 
	in the three-dimensional domain $\Omega = [0,1]^3$.
	Assuming the medium to be linear, 
	the electric field, $\BE(x,t)$, the electric flux, $\BD(x,t)$, as well as the magnetic field, $\BH(x,t)$, and the magnetic flux density, $\BB(x,t)$, are related 
	through the constitutive relations
	\begin{equation}\label{eq:maxwell relation}
		\BD = \varepsilon \BE,\quad \BH = \BB/\mu, 
	\end{equation} 
	where $\varepsilon$ and $\mu$ are the permittivity and permeability of the medium, respectively, and they may vary with space and time.
	
	Maxwell’s equations must be supplemented by boundary conditions that must be
	satisfied by the electric and magnetic fields at physical boundaries.
	For the important special case of a perfect conductor, the conditions take 
	a special form as the perfect conductor supports surface charges and 
	currents, whereas the fields are unable to penetrate into the body \cite{ACL18}, i.e.,
	\begin{equation}\label{eq:EB perfect conductor}
		\Bn \times \BE = \textbf{0},\qquad  \Bn\cdot \BB=0, \quad \text{on}\; \partial \Omega,
	\end{equation}
	where $\Bn$ is the unit normal to the boundary $\partial \Omega$. 
	However, there also exist media that are more or less dissipative, for
	instance, when the exterior medium $\bbR^3\backslash \Omega$ is a conductor but not a perfect one \cite{ACL18}. In this case, an impedance boundary condition appears in which the tangential electric and magnetic fields are related through a surface impedance $Z_s$,
	\begin{equation}\label{eq:EB impedance boundary condition}
		\BE\times \Bn +Z_s \Bn \times (\BH \times \Bn) =0, \quad \text{on}\; \partial \Omega.
	\end{equation} 
	In its simplest form, the impedance $Z_s=\sqrt{\mu/\varepsilon}$ is a characteristic of the medium, which allows the plane wave to leave the domain $\Omega$ with velocity $v=1/\sqrt{\mu \varepsilon}$ if $\partial \Omega$ is a plane.
	
	\subsection{A matrix representation of Maxwell's equations }
 
	We shall now consider a medium in which $\varepsilon$ and $\mu$ are independent of time. The electromagnetic equations \eqref{eq:maxwell linear medium}-\eqref{eq:maxwell relation} 
	can be written with the unknowns $\BE$ and $\BB$. They read as
	\begin{equation} \label{eq:maxwell}
		\begin{aligned}
			\frac{\partial}{\partial t}(\sqrt{\varepsilon} \BE)
			-v(\nabla-\frac{1}{2\mu} \nabla \mu)\times (\frac{\BB}{\sqrt{\mu}})
			&= - \frac{\BJ}{\sqrt{\varepsilon}},\quad
			(\nabla +\frac{1}{2 \mu} \nabla \mu)\cdot (\frac{\BB}{\sqrt{\mu}})=0,\\
			\frac{\partial}{\partial t}(\frac{\BB}{\sqrt{\mu}})
			+v(\nabla -\frac{1}{2\varepsilon} \nabla \varepsilon)\times (\sqrt{\varepsilon}\BE) &=0,\quad
			(\nabla +\frac{1}{2 \varepsilon} \nabla \varepsilon)\cdot (\sqrt{\varepsilon}\BE)=\frac{\rho}{\sqrt{\varepsilon}},
		\end{aligned}
	\end{equation}
	where $v = 1/\sqrt{\varepsilon \mu }$ is the speed of light in the medium.
	Remark that vacuum is a particular case of a homogeneous medium. 
	For the sake of simplicity of notation, define
 \begin{align*}
     \Cf &= \frac{1}{\sqrt{2}} \bigg(
     \sqrt{\varepsilon} E_x \ket{0}
     +\sqrt{\varepsilon} E_y \ket{1}
     +\sqrt{\varepsilon} E_z \ket{2}
     +\frac{1}{\sqrt{\mu}} B_x \ket{4}
     +\frac{1}{\sqrt{\mu}} B_y \ket{5}
     +\frac{1}{\sqrt{\mu}} B_z \ket{6}
     \bigg),\\
     \Cj &= \frac{1}{\sqrt{2\varepsilon}} \bigg(
     J_x \ket{0}+J_y \ket{1}+J_z\ket{2}
     -v\rho\ket{7}
     \bigg).
 \end{align*}
	Write Equation~\eqref{eq:maxwell} in vector form as
	\begin{align}
		&\frac{\partial \mathcal{F}}{\partial t} = \Cm \Cf -\Cj= \begin{bmatrix}
			\textbf{0} &v\Cm_{12}\\
			v\Cm_{21} &\textbf{0}
		\end{bmatrix}  \Cf-\Cj. \quad 
    \label{eq:maxwell matrix 1}  
    \end{align}
     Here the operator $\Cm_{12}$ and $\Cm_{21}$ are defined by
     \begin{align*}
  & \Cm_{12} = \Cc-\Cc\bar{\mu}, \quad
		\Cm_{21} = -\Cc+\Cc\bar{\varepsilon},\quad
  \Cc = \begin{bmatrix}
			0 &-\partial_z &\partial_y &-\partial_x\\
			\partial_z &0 &-\partial_x &-\partial_y\\
			-\partial_y &\partial_x &0 &-\partial_z\\
			\partial_x &\partial_y &\partial_z &0
		\end{bmatrix},   \notag
	\end{align*} 
    where 
    $	\bar{\varepsilon} = \frac{\ln \varepsilon}{2}, \;
    \bar{\mu} = \frac{\ln \mu}{2}$.
		Following \cite{vaha2020,kh2022}, the Riemann-Silberstein vector \cite{Khan2005An}  is defined by
		\begin{equation}
			\BF^{\pm} = (\sqrt{\varepsilon} \BE \pm i \BB/\sqrt{\mu})/\sqrt{2}.
		\end{equation}
	   We define new variables and source term based on Riemann-Silberstein vector as 
	   \begin{align*}
	   \Psi &= \ket{0}\otimes \bm{\psi^+}+\ket{1}\otimes \bm{\psi^-},\quad  \FJ=\ket{0}\otimes J^+ +\ket{1}\otimes J^-,
	   \end{align*}
      with the vector $\bm{\psi^{\pm}}$ and $J^{\pm}$ defined by
      \begin{align*}
    \Psi^{\pm} &=\frac 12\bigg(
    (-F_x^{\pm} \pm iF_y^{\pm})\ket{0}
    +F_z^{\pm}\ket{1}+F_z^{\pm}\ket{2}
    +(F_x^{\pm} \pm iF_y^{\pm})\ket{3}
    \bigg),\\
    J^{\pm}&= \frac{1}{2\sqrt{2\varepsilon}}
      \bigg((-J_x \pm iJ_y)\ket{0}
        +(J_z+v\rho)\ket{1}+(J_z-v\rho)\ket{2}
        +(J_z\pm iJ_y)\ket{3}
      \bigg).
\end{align*}
  Then, we  write Maxwell's equation in a matrix form as
   \begin{align} \label{eq:maxwell matrix}
   	\frac{\partial \Psi}{\partial t} 
   	= (M_0 +M')\Psi - \FJ
   	=v\bigg(-\begin{bmatrix}
   		\bm{\Sigma} \cdot \bm{\nabla} &\textbf{0}\\
   		\textbf{0}  &\bm{\Sigma}^{*} \cdot \bm{\nabla}
   	\end{bmatrix}
   	+\frac 12 \begin{bmatrix}
   		M_{11}' &M_{12}'\\
   		M_{21}' &M_{22}'
   	\end{bmatrix}\bigg)\Psi -\FJ,
   \end{align}
   where $M_0+M'=(T \Cm T^{\dag})$,  and the unitary matrix $T$ and Hermitian matrix $\bm{\Sigma}$ are defined by
		\begin{equation}
			T = \frac 12 
			\begin{bmatrix}
				-1 &i &0 &0  &-i &-1 &0 &0 \\
				0 &0 &1 &i  &0  &0  &i &-1\\
				0 &0 &1 &-i &0  &0  &i &1 \\
				1 &i &0 &0  &i  &-1 &0 &0\\
				-1 &-i&0 &0  &i  &-1 &0 &0\\
				0 &0 &1 &-i &0  &0  &-i&-1\\
				0 &0 &1 &i  &0  &0  &-i&1\\
				1 &-i&0 &0  &-i &-1 &0 &0 
			\end{bmatrix},
		\quad 
			\bm{\Sigma} = 
		\begin{bmatrix}
			\bm{\sigma} &\textbf{0}\\
			\textbf{0} &\bm{\sigma}
		\end{bmatrix}=
		\textbf{1}\otimes \bm{\sigma}.
		\end{equation}
	Here $\bm{\Sigma} \cdot \bm{\nabla} = \Sigma_1 \partial_x +\Sigma_2 \partial_y+\Sigma_3\partial_z$, $\Sigma_i=\textbf{1} \otimes \sigma_i$, and
	the Pauli matrices are defined by 
	\begin{equation*}
			\sigma_1 = \begin{bmatrix}
			0 & 1\\
			1 & 0
		\end{bmatrix},
		\quad
		\sigma_2 = \begin{bmatrix}
			0 & -i\\
			i & 0
		\end{bmatrix}, \quad
		\sigma_3 = \begin{bmatrix}
			1 &0 \\
			0 &-1
		\end{bmatrix}.
	\end{equation*}
       The conjugate of  $\bm{\Sigma}$ is denoted by
		$\bm{\Sigma}^*$.
		Each component of $M'$ is defined by 
		\begin{align*}
			M_{11}'&= \bm{\sigma}\cdot (\bm{\nabla} (\bar{\varepsilon}+\bar{\mu}))\otimes \textbf{1},\quad\quad
			M_{12}' = \bm{\sigma}\cdot (\bm{\nabla} (\bar{\varepsilon}-\bar{\mu}))\sigma_2)\otimes \sigma_2,\\
			M_{21}'&=\bm{\sigma}^*\cdot (\bm{\nabla} (\bar{\varepsilon}-\bar{\mu}))\sigma_2)\otimes \sigma_2, \quad
			M_{22}' = \bm{\sigma}^*\cdot (\bm{\nabla} (\bar{\varepsilon}+\bar{\mu}))
			\otimes \textbf{1}.	
		\end{align*}
	 The electromagnetic field is recovered by a unitary matrix, i.e. 
	 $\Cf = T^{\dagger} \Psi$.

		  Mathematically, Equation~\eqref{eq:maxwell matrix} is equivalent to Equation~\eqref{eq:maxwell matrix 1} after applying a unitary transformation.
		In a source-free homogeneous medium, the matrix $\Cm$ has nonzero 4-dimensional matrix blocks only along the off-diagonal directions. However, $M_0$ has nonzero 2-dimensional matrix blocks along the diagonal, and all other entries of the matrix are zero, which is a direct sum of four Pauli matrix blocks. Since the structure of the time evolution equation resulting from the Riemann-Silberstein formulation is simpler to work with in the qubit framework, we simulate Equation~\eqref{eq:maxwell matrix} instead of Equation~\eqref{eq:maxwell matrix 1}. 
  
		\section{Quantum simulation via Schr$\ddot{\text{o}}$dingerisation}\label{sec:schr}
  
		In this section, we briefly review the Schr$\ddot{\text{o}}$dingerisation approach first proposed in \cite{JLY22a,JLY22b} for general linear ODEs, which is written as
		\begin{equation} \label{eq:ODE}
			\frac{\D}{\D t}\Bu = A(t) \Bu(t) +\Bb(t),\quad 
			\Bu(0) = \Bu_0,
		\end{equation}  
		where $\Bu$, $\Bb\in \bbC^n$ and $A\in \bbC^{n\times n}$. 
		It is noted that all semi-discrete systems (after spatial discretizations) of  (PDEs) are ODE systems.
		Using an auxiliary scalar function $r(t)\equiv 1$, the above ODEs can be rewritten as a homogeneous system
		\begin{equation}
			\frac{\D }{\D t} \begin{bmatrix}
				\Bu \\
				r
			\end{bmatrix}=
			\begin{bmatrix}
				A &\Bb\\
				\textbf{0}^{\top} &0
			\end{bmatrix}\begin{bmatrix}
				\Bu \\
				r
			\end{bmatrix},\qquad
			\begin{bmatrix}
				\Bu(0)\\ r(0)
			\end{bmatrix} = \begin{bmatrix}
				\Bu_0\\ 1
			\end{bmatrix}.
		\end{equation}
		Therefore, without generality, we assume $\Bb =\textbf{0}$ in \eqref{eq:ODE}. 
		Since any matrix can be decomposed into a Hermitian term and an anti-Hermitian one, Equation \eqref{eq:ODE} can be expressed as
		\begin{equation}\label{eq:ODE1}
			\frac{\D}{\D t} \Bu = H_1 \Bu +iH_2 \Bu, \quad  \Bu(0) = \Bu_0,
		\end{equation}
		with $H_1 = \frac{A+A^{\dag}}{2}$ and $H_2 = \frac{A-A^{\dag}}{2i} $, both Hermitian. 
		Using the warped phase transformation $\Bw(t,p) = e^{-p}\Bu$ for $p>0$ and symmetrically extending the initial data to $p<0$, Equation \eqref{eq:ODE1} is converted to a system of linear convection equations:
		\begin{equation}\label{eq:up}
			\frac{\D}{\D t} \Bw = -H_1 \partial_p \Bw + iH_2\Bw, \quad 
			\Bw(0) = e^{-|p|}\Bu_0.
		\end{equation}
		The solution $\Bu$ can be restored by 
		\begin{equation}\label{eq: recover uh e}
			\Bu = e^{p^*}\Bw, \quad \text{for} \; \text{some} \; p^*>0,
		\end{equation}
	    or using the integration to obtain
		\begin{equation}\label{eq: recover uh integration}
			\Bu = \int_0^{\infty} \Bw \;dp.
		\end{equation}
		To discretize the $p$ domain, we choose a large enough domain $p\in [L,R]$ (so wave initially supported inside the domain remains so in the duration of computation) and set the uniform mesh size  $\triangle p = (R-L)/N$ where $N$ is a positive even integer and grid points denoted by $L=p_0<\cdots<p_{N}=R$. Define the vector $\Bv$  the collection of the function $\Bw$ at these grid points by
		\begin{equation}
			\Bv= \sum_{j\in [n]} \ket{j} \otimes \Bv_j,\quad \Bv_j =\sum_{k\in [N]} \Bw_j(t,p_k) \ket{k},
		\end{equation}
	    where $\Bw_j$ is the $j$-th component of $\Bw$.
		The $1$-D basis functions for the Fourier spectral method are usually chosen as
		\begin{equation} \label{eq:phi nu}
			\phi_l^p(x) = e^{i\nu_l^p x},\qquad \nu_l^p = 2\pi (l-N/2-1)/(R-L),\quad 1\leq l\leq N. 
		\end{equation}
		Using \eqref{eq:phi nu}, we define 
		\begin{equation}
			\Phi^p = (\phi_{jl}^p)_{N\times N} = (\phi_l^p(p_j))_{N\times N},  \quad
			D_{p} = \text{diag}\{\nu_1^p,\cdots,\nu_{N}^p\}.
		\end{equation}
		Considering the Fourier spectral discretisation on $p$, one easily gets
		\begin{equation}
			\frac{\D}{\D t} \Bv = -i(H_1\otimes P) \Bv + i(H_2\otimes \textbf{1}_{N})\Bv.
		\end{equation} 
		Here $P$ is the matrix representation of the momentum operator $-i\partial_p$  and
		defined by $P = \Phi^p D_{p} (\Phi^p)^{-1}$.
		By a change of variables $\tilde{\Bv}=[\textbf{1}_n\otimes (\Phi^p)^{-1}]\Bv$, one gets
		\begin{equation}\label{eq:}
			\frac{\D}{\D t} \tilde{\Bv} = -i[(H_1 \otimes D_{p}) - (H_2\otimes \textbf{1}_N)] \tilde{\Bv} =-iH\tilde{\Bv}.
		\end{equation}
		At this point, a quantum simulation can be carried out to the above Hamiltonian system. For time-dependent Hamiltonians, we refer to \cite{AFL21,AFL22,BCSWW22,FLT23} for quantum algorithms. 
		In practice, $H_1$ and $H_2$ are usually sparse,  hence the Hamiltonian $H=H_1\otimes D_{p}-H_2\otimes \textbf{1}_M$ inherits the sparsity.
		It is easy to find that 
		\begin{equation}
			s(H)= \mathscr{O}(s(A)), \quad
			\|H\|_{\max} \leq \|H_1\|_{\max}/\triangle p +\|H_2\|_{\max},
		\end{equation}
		where $s(H)$ is the sparsity of the matrix $H$ (maximum number of nonzero entries in each row) and $\|H\|_{\max}$ is its max-norm (value of largest entry in absolute value). In quantum algorithms, Hamiltonian simulation with nearly optimal dependence on  all parameters can be found in \cite{BCK15}, with complexity given by the next lemma.
		\begin{lemma}\label{lem:hamiltonian complexity}
			An $s$-sparse Hamiltonian H action on $m_H$ qubits can be simulated within error $\delta$ with
			\begin{equation}
				\mathscr{O}\bigg(\tau \log(\tau/\delta)/(\log\log(\tau/\delta))\bigg)
			\end{equation} 
		queries and 
		\begin{equation}
			\mathscr{O}\bigg(
		    \tau \big[m_H+\log_{2.5}(\tau/\delta)\big]
		    \frac{\log(\tau/\delta)}{\log\log(\tau/\delta)}	
			\bigg)
		\end{equation}
    	additional 2-qubits gates, where $\tau = s\|H\|_{max} t$, and $t$ is the evolution time.
			\end{lemma}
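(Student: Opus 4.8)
The statement is, up to notation, a restatement of the main simulation result of Berry, Childs and Kothari \cite{BCK15}, so the plan is to reproduce the ingredients of their construction rather than to prove it from scratch. First I would set up the sparse-access oracles for $H$: an oracle that, given a row index and a column counter, returns the column index of the corresponding nonzero entry, and an oracle that returns the value of $H_{ij}$. Using these two oracles, and following the quantum-walk (qubitization) framework, I would build a walk operator $W$ acting on the system register tensored with an ancilla register, such that on the relevant invariant two-dimensional subspaces $W$ acts as a rotation by an angle $\theta_k$ with $\sin\theta_k = \lambda_k/(s\|H\|_{\max})$, where the $\lambda_k$ are the eigenvalues of $H$. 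Thus the spectral data of $H$ is encoded in the walk, and each application of $W$ costs $\mathscr{O}(1)$ queries to the oracles and $\mathscr{O}(m_H)$ additional gates.

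Second, I would approximate the target propagator $e^{-iHt}$ by a finite linear combination of powers of $W$ -- equivalently, a truncated Chebyshev / Jacobi--Anger-type expansion of $e^{-i\tau x}$ in the walk operator. The crucial quantitative fact is that, to achieve error $\delta$ in operator norm, it suffices to truncate the expansion at order $K = \mathscr{O}\!\big(\tau + \log(1/\delta)/\log\log(1/\delta)\big)$ with $\tau = s\|H\|_{\max}t$; this is precisely where the additive, nearly optimal $\log(1/\delta)/\log\log(1/\delta)$ dependence on the precision enters.

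Third, I would realise this finite linear combination of unitaries through the LCU ``prepare/select'' primitive, using an $\mathscr{O}(\log K)$-qubit coefficient register, and then apply oblivious amplitude amplification to compensate the subnormalisation and render the implementation nearly deterministic. Counting the $\mathscr{O}(K)$ controlled walk steps yields the query bound $\mathscr{O}\!\big(\tau \log(\tau/\delta)/\log\log(\tau/\delta)\big)$, while the additional $m_H$-qubit cost of each walk step, together with the $\mathrm{poly}\log$ overhead of state preparation on the coefficient register, accounts for the stated two-qubit gate count $\mathscr{O}\!\big(\tau[m_H+\log_{2.5}(\tau/\delta)]\log(\tau/\delta)/\log\log(\tau/\delta)\big)$.

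The main obstacle is the second step: obtaining the sharp additive error dependence. A naive Taylor truncation gives only $K = \mathscr{O}\!\big(\tau\log(\tau/\delta)\big)$ and hence a strictly worse, multiplicative bound; the improvement to $\log(\tau/\delta)/\log\log(\tau/\delta)$ requires the precise decay estimates for the Chebyshev (Bessel) coefficients of $e^{-i\tau x}$ combined with the compression afforded by amplitude amplification. It is exactly this combination that makes the dependence on all parameters simultaneously nearly optimal, matching the no-fast-forwarding lower bound up to the doubly logarithmic factor.
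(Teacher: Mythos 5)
The paper offers no proof of this lemma at all: it is quoted verbatim as a known result from Berry, Childs and Kothari \cite{BCK15}, which is precisely the source whose construction you reconstruct. Your outline of the sparse-access quantum walk, the truncated Chebyshev/LCU correction with $K=\mathscr{O}\big(\tau+\log(1/\delta)/\log\log(1/\delta)\big)$, and oblivious amplitude amplification is a faithful account of that reference, so your approach coincides with the paper's (modulo the minor point that this predates and is distinct from what is now called qubitization).
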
 

   After the computation of $\tilde{\Bv}$, one can use quantum (inverse) FFT to get back to $\Bv$, and then $\Bu$ via \eqref{eq: recover uh e} or \eqref{eq: recover uh integration}.  For more details on the choice of $p^*$ in \eqref{eq: recover uh e}  or numerical integration of \eqref{eq: recover uh integration}, we refer to Section~2 in \cite{JLY22b}.
   
		\section{Quantum simulation of Maxwell's equations in a linear  homogeneous medium with periodic boundary conditions } \label{sec:schr periodic BD}
		
		In this section, we first consider quantum simulation for Maxwell's equations with periodic boundary conditions in a linear homogeneous medium, namely
		$\varepsilon$ and $\mu$ are constants. In this case, the matrix $M'$ in \eqref{eq:maxwell matrix} disappears. 
		 We choose a uniform spatial mesh size $\triangle x  = \triangle y=\triangle z= M^{-1}$ for $M $ with  an even positive integer. 
   
		\subsection{Quantum Simulation of \eqref{eq:maxwell matrix} with the spectral method} \label{sec:periodic matrix homogeneous}
  
		The $3$-dimensional grid points are given by $\Bx_{\Bj}=(x_{j1},y_{j2},z_{j3})$ with $\Bj = (j_1,j_2,j_3)$, and 
		\begin{equation}
			x_{j_1} = j_1 \triangle x,  \quad 
			y_{j_2} = j_2 \triangle y, \quad 
			z_{j_3} = j_3\triangle z,\quad j_1,j_2,j_3\in [M].
		\end{equation}
		Let the $n_{\Bj}^i$-th component of the vector $\bm{\psi}_h$ that approximates $\psi_{i}(t,x_{\Bj})$ be denoted by $\bm{\psi}_{i,\Bj}$, where
		\begin{equation}
			n_{\Bj}^i = M^{3}i+\sum_{k=1}^3 j_k M^{k-1}, \quad \ket{\Bj} =\ket{j_3} \otimes \ket{j_2} \otimes \ket{j_1},\quad i\in [8],
		\end{equation}
        and $\psi_i$ is the $i$-th component of $\Psi$.
		Therefore, one has
		\begin{equation}\label{eq:psih}
			\bm{\psi}_h =\sum_{i \in [8]} \ket{i } \otimes (\sum_{\Bj} \bm{\psi}_{i,\Bj}\ket{\Bj}).
		\end{equation}
		The discretization for the source term $\FJ$ is denoted by
		\begin{align}\label{eq:Jh}
			\BJ_h = \sum_{i\in [8]} \ket{i} \otimes (\sum_{\Bj} \FJ_{i}(t,x_{\Bj})\ket{\Bj}) .\quad
		\end{align}
		The 1-D basis functions for the Fourier spectral method in $x$-space are defined by
		\begin{equation*}
			\Phi = (e^{i\nu_l x_j})_{M\times M}, \quad D_{\nu} = \text{diag}\{\nu_1,\cdots,\nu_M\},\quad 
			\nu_l = 2\pi(l-M/2-1), \quad 1\leq l\leq M.
		\end{equation*}
		Considering the Fourier spectral discretization on $\Bx$,  one easily gets
		\begin{equation}\label{eq: varphi}
           \begin{aligned}
			\frac{\D}{\D t} \bm{\psi}_h &= \bbQ \bm{\psi}_h - \BJ_h,\quad 
         \bbQ = -iv \begin{bmatrix}
				\bbP_1  &\textbf{0}\\
				\textbf{0}   &\bbP_1^{*}
			\end{bmatrix},\quad
			\bbP_1 =\sum_{i=1}^3\Sigma_i \otimes\BP_i .
           \end{aligned}
		\end{equation}
		The matrices $\BP_1$, $\BP_2$ and $\BP_3$ are defined by 
		\begin{equation}
			\BP_1 = \bm{\Phi} \BD_{1} \bm{\Phi}^{-1},\quad
			\BP_2 = \bm{\Phi} \BD_{2} \bm{\Phi}^{-1},\quad
			\BP_3 = \bm{\Phi} \BD_{3} \bm{\Phi}^{-1}.
		\end{equation}
		Here $\bm{\Phi} = \Phi^{\otimes^3}$,
		$\BD_{1} = {\textbf{1}_M}^{\otimes^2}\otimes D_{\nu}$, 
		$\BD_{2} = \textbf{1}_M \otimes D_{\nu} \otimes \textbf{1}_M$, 
		$\BD_{3} = D_{\nu} \otimes \textbf{1}_M^{\otimes^2}$.
        Note that the Fourier spectral discretization will generate a matrix  which is not sparse, so this may affect the complexity used in Lemma 3.1 and sparse access.
		Thus, let $\bm{c}(t) = (\textbf{1}^{\otimes^3}\otimes\bm{\Phi}^{-1} )\bm{\psi}_h$, $\tilde{\BJ} = (\textbf{1}^{\otimes^3}\otimes\bm{\Phi}^{-1} ) \BJ_h$, Equation \eqref{eq: varphi} is rewritten as
		\begin{equation} \label{eq:maxwell simple}
  \begin{aligned}
			\frac{\D}{\D t} \bm{c} =\tilde{\bbQ}\bm{c} -\tilde{\BJ}, \quad
			\tilde{\bbQ} = -iv \begin{bmatrix}
				\bbQ_1 &\textbf{0}\\
				\textbf{0} &\bbQ_1^*
			\end{bmatrix},\quad
			\bbQ_1 = \sum_{i=1}^3 \Sigma_i \otimes \BD_i.
   \end{aligned}
		\end{equation}
		It is obvious to see that $\tilde{\bbQ}$ is anti-Hermitian.
		We rewrite Equation~\eqref{eq:maxwell simple} in a homogeneous form
		\begin{equation}
			\frac{\D}{\D t} \Bu = A\Bu, \quad
			\Bu = \begin{bmatrix}
				\Bc(t) \\
				r(t)
			\end{bmatrix},\quad
			A = \begin{bmatrix}
				\tilde\bbQ  &-\tilde{\BJ}\,\\
				\textbf{0}^{\top} &0
			\end{bmatrix}, \quad
			\Bu(0) =  \begin{bmatrix}
				\Bc(0) \\
				1
			\end{bmatrix},
		\end{equation}
	which is a $n=8M^3+1$ dimensional ODE system.
		With the help of the preceding calculation,
		we are now in a position to apply Schr$\ddot{\text{o}}$dingerisation.
		In terms of \eqref{eq:maxwell simple}, 
		using a new variable
		\begin{equation}
			\tilde{\Bv} = (\textbf{1}_{n}\otimes\Phi^{-1})\sum_{i,j} \ket{n_{\Bj}^i}\otimes \Bv_{n_{\Bj}^i},
			\quad
			\Bv_{n_{\Bj}^i} = \sum_k \psi_{i,j}(t,p_k)\ket{k},
		\end{equation}
		one gets an ODE system that suits a quantum simulation:
   \begin{equation}\label{eq:hamiltion psi}
			\frac{\D}{\D t} \tilde{\Bv} = -i[(H_1\otimes D_{p})-(H_2\otimes \textbf{1}_{N})]\tilde{\Bv}=-iH \tilde{\Bv},
   \end{equation}
		where the matrices $H_1$ and $H_2$ are defined by 
		\begin{equation}
			H_1 = \frac 12\begin{bmatrix}
				\textbf{0} & -\bm{\tilde{\BJ}}\\
				-\bm{\tilde{\BJ}}^{\top} & 0 
			\end{bmatrix},\quad
			H_2 = \frac{1}{2i} \begin{bmatrix}
				2\tilde \bbQ & -\bm{\tilde{\BJ}} \\
				\bm{\tilde{\BJ}}^{\top} &0
			\end{bmatrix}.
		\end{equation}
	   \begin{theorem}\label{thm: complexity 1}
	   	Given sparse-access to the Hermitian matrix $H$ in \eqref{eq:hamiltion psi} and the unitary $U_{\text{initial}}$ that prepares the initial quantum state $\ket{\Bu(0)}=U_{\text{initial}} \ket{0}$. Assume the mesh size satisfies $N=\mathscr{O}(M)$ and $M=2^m$. With the 
	Schr$\ddot{\text{o}}$dingerisation method, the state $\ket{\Bu(t)}$ can be simulated with gate complexity given by
	   	\begin{equation}
	   		N_{Gate}= \mathscr{O}\big(M((d+2)m^2+4m)/\log m\big)
          +\mathscr{O}(m\log m),
	   	\end{equation}
       where $d$ is the dimensional number.
	   \end{theorem}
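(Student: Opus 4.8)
The plan is to read off the three parameters that govern sparse Hamiltonian simulation of the matrix $H$ in \eqref{eq:hamiltion psi} -- the qubit number $m_H$, the sparsity $s(H)$, and the max-norm $\|H\|_{\max}$ (hence $\tau=s(H)\|H\|_{\max}t$) -- feed them into Lemma~\ref{lem:hamiltonian complexity}, and then add the cost of the Fourier transforms used to pass between the physical and spectral representations when preparing $\ket{\Bu(0)}$ and recovering $\ket{\Bu(t)}$.

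First I would count qubits. The Schr\"odingerised state $\tilde{\Bv}$ lives in $\bbC^{(8M^{d}+1)N}$: the eight Riemann--Silberstein components need $3$ qubits, the $M^{d}=2^{dm}$ spatial grid needs $dm$ qubits, the homogenisation variable $r$ forces one extra qubit (since $8M^{d}+1$ rounds up to $2^{dm+4}$), and the auxiliary $p$-register with $N=\mathscr{O}(M)=2^{m+\mathscr{O}(1)}$ points contributes $m+\mathscr{O}(1)$ qubits, so that $m_H=(d+1)m+4$. Next comes the sparsity, which is the decisive structural point: after the spatial change of variables $\Bc=(\textbf{1}\otimes\bm{\Phi}^{-1})\bm{\psi}_h$ the derivative operators become the \emph{diagonal} matrices $\BD_i$, so $\tilde{\bbQ}=-iv\,\mathrm{diag}(\bbQ_1,\bbQ_1^{*})$ with $\bbQ_1=\sum_{i=1}^{3}\Sigma_i\otimes\BD_i$ is $\mathscr{O}(1)$-sparse (a Pauli structure multiplied by diagonals), the source $\tilde{\BJ}$ adds only one nonzero per row, and because $D_{p}$ is diagonal and $\textbf{1}_N$ is the identity, $H=H_1\otimes D_{p}-H_2\otimes\textbf{1}_N$ inherits $s(H)=\mathscr{O}(1)$.

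For the max-norm I would invoke the estimate $\|H\|_{\max}\le\|H_1\|_{\max}/\triangle p+\|H_2\|_{\max}$ stated just before Lemma~\ref{lem:hamiltonian complexity}. On the unit cube the Fourier frequencies satisfy $|\nu_l|=\mathscr{O}(M)$, so the spectral-derivative block gives $\|H_2\|_{\max}=\mathscr{O}(M)$, while $\|H_1\|_{\max}=\mathscr{O}(1)$ for a bounded source and $1/\triangle p=N/(R-L)=\mathscr{O}(M)$; hence $\|H\|_{\max}=\mathscr{O}(M)$ and $\tau=s(H)\|H\|_{\max}t=\mathscr{O}(M)=\mathscr{O}(2^{m})$. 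Taking the target error at most inverse-polynomial, $\delta\ge 1/\mathrm{poly}(M)$, collapses the logarithms: $\log(\tau/\delta)=\mathscr{O}(m)$, $\log\log(\tau/\delta)=\mathscr{O}(\log m)$, and $\log_{2.5}(\tau/\delta)\le m$. Substituting into the gate count of Lemma~\ref{lem:hamiltonian complexity},
\begin{equation*}
\mathscr{O}\!\Big(\tau\big[m_H+\log_{2.5}(\tau/\delta)\big]\tfrac{\log(\tau/\delta)}{\log\log(\tau/\delta)}\Big)
=\mathscr{O}\!\Big(M\big[(d+2)m+4\big]\tfrac{m}{\log m}\Big)
=\mathscr{O}\!\Big(M\big((d+2)m^{2}+4m\big)/\log m\Big),
\end{equation*}
which is the leading term. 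Implementing the spectral-to-physical transforms in $\Bx$ and $p$ with approximate quantum Fourier transforms on the $\mathscr{O}(m)$-qubit registers (treating $d=\mathscr{O}(1)$) costs the additive $\mathscr{O}(m\log m)$, and summing yields the claimed bound.

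I expect the sparsity and norm bookkeeping to be the main obstacle. The bound $s(H)=\mathscr{O}(1)$ holds only because the simulation is carried out in the Fourier frame where $\BD_i$ and $D_{p}$ are diagonal -- in the physical frame the spectral matrices $\BP_i$ are dense -- and the identity $\tau=\mathscr{O}(M)$ relies on the two competing scales $1/\triangle p$ and $\max_l|\nu_l|$ both being $\mathscr{O}(M)$ rather than larger. A secondary subtlety is the error dependence: one must restrict $\delta$ to be at worst inverse-polynomial in $M$ so that $\log(\tau/\delta)=\mathscr{O}(m)$, and, if the current $\FJ$ is time-dependent so that $H=H(t)$, one would replace Lemma~\ref{lem:hamiltonian complexity} by a time-dependent sparse Hamiltonian simulation result such as those cited in Section~\ref{sec:schr}, which enjoy the same asymptotic scaling.
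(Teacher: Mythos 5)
Your proposal is correct and follows essentially the same route as the paper's proof: invoke Lemma~\ref{lem:hamiltonian complexity} with $s(H)=\mathscr{O}(1)$, $\|H\|_{\max}=\mathscr{O}(M)$, $t=\mathscr{O}(1)$ and $\delta\sim M^{-1}$ so that $\log(\tau/\delta)/\log\log(\tau/\delta)=\mathscr{O}(m/\log m)$, then add the $\mathscr{O}(m\log m)$ quantum Fourier transform cost for the passage $\Bu(0)\to\tilde{\Bv}(0)$ and back. In fact you supply details the paper leaves implicit -- notably the qubit count $m_H=(d+1)m+4$ that produces the $(d+2)m^2+4m$ factor, and the diagonal-frame argument for the sparsity -- so the write-up is, if anything, more complete than the original.
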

    \begin{proof}
    Given the initial state $\ket{\Bu(0)}$, one gets the following procedure
    \begin{equation*}
     \Bu(0)\xrightarrow{DFT}\Tilde{\Bv}(0)\xrightarrow{e^{-iHt}}\Bv(t)\xrightarrow{DFT} \Bu(t).
    \end{equation*}
        It is known that the quantum Fourier transforms in one dimension can be implemented using $\mathscr{O}(m \log m)$ gates. 
        Under the  assumption of the mesh size, the lack of regularity of the initial condition  implies first-order accuracy on the spatial discretization, the error bound $\delta$ satisfies 
        $M^{-1}\sim N^{-1}\sim \delta$.
        Considering $s(H)=\mathscr{O}(1)$, $t=\mathscr{O}(1)$ and $\|H\|_{\max}=\max\{\|H_1\|_{\max},\|H_2\|_{\max}\}=\mathscr{O}(M)$, one has
        \begin{equation*}
            \frac{\log \tau \delta^{-1} }{ \log \log \tau \delta^{-1}}= \mathscr{O}(\frac{m}{\log m}).
        \end{equation*}
    The proof is finished by 
        Lemma~\ref{lem:hamiltonian complexity} .
    \end{proof}

  It is well known that the complexity of the FDTD  is $\mathscr{O}(M^{d+1})$ under the given error bound $\delta\sim M^{-1}$ 
  \cite{GL1996},  which is much larger than quantum algorithms. However, Schr$\ddot{\text{o}}$dingerisation with  Yee's algorithm for quantum simulation  not only reduces the complexity, but also retains some advantages of the FDTD schemes.

		\subsection{Quantum simulation of \eqref{eq:maxwell} with Yee's algorithm}\label{sec:Quantum simulation with Yee algorithm}
  
		Yee's finite difference method \cite{TafYee00} for Maxwell equations is the most popular algorithm 
		for numerically approximating Maxwell's equations, due to its simplicity and preservation of the continuous vector identities on the discrete grid.
		In this section, we use Yee's lattice discretization of the spatial operator in Equation \eqref{eq:maxwell}.  
		Without causing any ambiguity, let $\BE $ denote $\sqrt{\varepsilon}\BE$, $\BB$ denote $\BB/\sqrt{\mu}$. The Maxwell equations in a linear homogeneous medium is written as
		\begin{subequations}\label{eq:maxwell Eq EB}
			\begin{align}
				\frac{\partial }{\partial t} \BE -v \nabla \times \BB &= -\frac{\BJ}{\sqrt{\varepsilon }},\quad
				\frac{\partial }{\partial t} \BB +v \nabla \times \BE = \textbf{0},\\
				\nabla \cdot \BB &=0, \quad \quad \quad 
				\nabla \cdot \BE = \frac{\rho}{\sqrt{\varepsilon}}. \label{eq:maxwell Gauss law, maxwell Thomson}
				\end{align}
		\end{subequations}  
	   From Equation~\eqref{eq:maxwell Eq EB}, the Maxwell-Gauss equation (or Gauss's law) and the Maxwell-Thomson equation~\eqref{eq:maxwell Gauss law, maxwell Thomson} are 
	   actually consequences of the other equations and charge conservation equation
	   \begin{equation}
	   	\frac{\partial \rho}{\partial t}  = \nabla \cdot \BJ.
	   \end{equation} 
	                        
		The different components of the electromagnetic field
		and of the current densities are calculated at the cell center (half integer index) and at the cell vertices (integer index) according to  Yee's lattice configuration:
		\begin{align*}
			\BE_{\Bj} &=(E_{x,\Bj},E_{y,\Bj},E_{z,\Bj})
			=(E_{x,j_1,j_2+\frac 12,j_3+\frac 12},\;
			E_{y,j_1+\frac 12,j_2,j_3+\frac 12},\;
			E_{z,j_1+\frac 12,j_2+\frac 12,j_3}),\\
			\BB_{\Bj} &=(B_{x,\Bj},B_{y,\Bj},B_{z,\Bj})=
			(B_{x,j_1+\frac 12,j_2,j_3},\;
			B_{y,j_1,j_2+\frac 12,j_3},\;
			B_{z,j_1,j_2,j_3+\frac 12}).
		\end{align*}
		Correspondingly, the current densities are calculated at the cell center (half integer index) and  the cell vertices (integer index) according to  Yee's lattice configuration:
		\begin{equation*}
			\BJ_{\Bj} =(J_{x,\Bj},J_{y,\Bj},J_{z,\Bj})/\sqrt{\varepsilon}
			=(J_{x,j_1,j_2+\frac 12,j_3+\frac 12},\;
			J_{y,j_1+\frac 12,j_2,j_3+\frac 12},\;
			J_{z,j_1+\frac 12,j_2+\frac 12,j_3}).
		\end{equation*}
		Following Yee's algorithm, one gets the semi-discrete system
		\begin{align}
				\frac{\D \BE_{h}}{\D t} 
				-v \nabla_h \times \BB_{h} &= -\BJ_{h}, \label{eq:semi-Eh}\\
				\frac{\D \BB_{h}}{\D t}  +v\nabla_h \times \BE_{h}  &= \textbf{0}, \label{eq:semi-Bh}
			\end{align}
			where $\BE_h$, $\BB_{h}$ and $\BJ_h$ are the collections of $\BE_{\Bj}$, $\BB_{\Bj}$ and $\BJ_{\Bj}$, 
			\begin{equation}
				\BE_h = \begin{bmatrix}
					\sum_{\Bj} E_{x,\Bj}\ket{\Bj}\\
					\sum_{\Bj} E_{y,\Bj}\ket{\Bj}\\
					\sum_{\Bj} E_{z,\Bj}\ket{\Bj}
				\end{bmatrix}, \quad
				\BB_h = \begin{bmatrix}
					\sum_{\Bj} B_{x,\Bj}\ket{\Bj}\\
					\sum_{\Bj} B_{y,\Bj}\ket{\Bj}\\
					\sum_{\Bj} B_{z,\Bj}\ket{\Bj}
				\end{bmatrix}, \quad
				\BJ_h = \begin{bmatrix}
					\sum_{\Bj} J_{x,\Bj}\ket{\Bj}\\
					\sum_{\Bj} J_{y,\Bj}\ket{\Bj}\\
					\sum_{\Bj} J_{z,\Bj}\ket{\Bj} 
				\end{bmatrix}.
			\end{equation}
			The discrete curl operator $\nabla_h \times$ is the central difference, which  is also used in divergence operators.
			Define a matrix
			\begin{equation}\label{eq:def FM}
				\BF_{M}=\sum_{i=0}^{M-2}\ket{i}\bra{i+1}+\ket{M-1}\bra{0}, \quad i\in [M].
			\end{equation}
			Using \eqref{eq:def FM}, we define the following matrices
			\begin{align*}
                \BC_{x} = \frac{\textbf{1}_{M^2}\otimes \BF_M -\textbf{1}_{M^3}}{\triangle x},\;
                \BC_{y} = \frac{\textbf{1}_M\otimes \BF_M\otimes \textbf{1}_M-\textbf{1}_{M^3}}{\triangle y},\;
                \BC_{z} = \frac{\BF_M \otimes\textbf{1}_{M^2} -\textbf{1}_{M^3}}{\triangle z}.
			\end{align*}
			The matrix expression for \eqref{eq:semi-Eh}-\eqref{eq:semi-Bh} is rewritten as
			a $n=6M^3+1$ dimensional ODE system:
			\begin{align}\label{eq:matrix FDTD}
				\frac{\D }{\D t} \Bu = A \Bu,\quad
				\Bu = 
				\begin{bmatrix}
					\BE_h\\
				    \BB_h\\
				    r
				\end{bmatrix}, \quad 
			A = 
				\begin{bmatrix}
					\textbf{0} &\BM_{B}^E  &-\BJ_h \\
					\BM_{E}^{B} &\textbf{0}  &\textbf{0}\\
					\textbf{0}  &\textbf{0} &0
				\end{bmatrix} ,\quad 
			\Bu(0) 
		=\begin{bmatrix}
			\BE_h(0)\\
			\BB_h(0)\\
			1
		\end{bmatrix},
			\end{align}
			where $\BM_{B}^E,\BM_{E}^B \in \bbR^{3M^3\times 3M^3}$, and they satisfy $\BM_{E}^B=-(\BM_{B}^E)^{\top}$, 
			\begin{equation}
              \BM_{B}^E = v \begin{bmatrix}
              	\textbf{0}   &-\BC_z        &\BC_y\\
              	\BC_z  		 &\textbf{0}    &-\BC_x\\
              	-\BC_y  	 &\BC_x  		&\textbf{0}
               \end{bmatrix},
			\end{equation}
		   where the zero vector $\textbf{0}$ has the same size as $\BC_x$. 
		   Comparing with \eqref{eq:matrix FDTD} and \eqref{eq:semi-Eh}-\eqref{eq:semi-Bh}, it can be seen that $\BM_{B}^E$ is the matrix expression
		   of the discrete curl operator.

			Applying the Schr$\ddot{\text{o}}$dingerisation, one gets an Hamiltonian system for the new variable $\Bv = (\textbf{1}_{n}\otimes \Phi^{-1})\Bu$, 
			\begin{equation}\label{eq:hamiltonian FDTD}
				\frac{\D}{\D t} \Bv(t) = -i (H_1\otimes D_{p}) \Bv + i (H_2\otimes \textbf{1}_N) \Bv = -i H\Bv,
			\end{equation}
			where the matrices $H_1$ and $H_2$ are defined by
			\begin{equation}\label{eq:def H1 H2}
				H_1 = \begin{bmatrix}
					\textbf{0} &\textbf{0} & -\BJ_h/2\\
					\textbf{0} &\textbf{0} & \textbf{0}\\
					-\BJ_h^{\top}/2   &\textbf{0} &0
				\end{bmatrix},\quad
				H_2 = \frac{1}{i}\begin{bmatrix}
					\textbf{0} &\BM_{B}^E  &-\BJ_h/2\\
					\BM_{E}^B  &\textbf{0} &\textbf{0}\\
					\BJ_h^{\top}/2    &\textbf{0} &0
				\end{bmatrix}.
			\end{equation}
		 \begin{theorem}
			Given sparse-access to the Hermitian matrix $H$ in \eqref{eq:hamiltonian FDTD}, assume the same assumptions in Theorem~\ref{thm: complexity 1} hold true. The 
			Schr$\ddot{\text{o}}$dingerisation method can be simulated with gate complexity given by
			\begin{equation}
				N_{Gate}=N_{Gate}= \mathscr{O}\big(M((d+2)m^2+2m)/\log m\big)
          +\mathscr{O}(m\log m).
			\end{equation}
		\end{theorem}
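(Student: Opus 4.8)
The plan is to follow the proof of Theorem~\ref{thm: complexity 1} almost verbatim, the only substantive change being that the dense spectral spatial operator is replaced by the sparse, banded central-difference curl of Yee's scheme. First I would record the computational pipeline
\begin{equation*}
\Bu(0)\xrightarrow{\mathrm{DFT}}\Bv(0)\xrightarrow{e^{-iHt}}\Bv(t)\xrightarrow{\mathrm{DFT}}\Bu(t),
\end{equation*}
where now the discrete Fourier transform is applied only on the auxiliary $p$-register needed to diagonalise $\partial_p$ in the warped-phase variable $\Bv=(\textbf{1}_n\otimes\Phi^{-1})\Bu$, since the Yee curl $\BM_{B}^{E}$ is already sparse in physical space and needs no spatial transform. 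As $N=\mathscr{O}(M)=\mathscr{O}(2^m)$, each one-dimensional QFT costs $\mathscr{O}(m\log m)$ gates, which is exactly the additive $\mathscr{O}(m\log m)$ term in the claim.

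Next I would feed $H=H_1\otimes D_p-H_2\otimes\textbf{1}_N$ from \eqref{eq:def H1 H2} into Lemma~\ref{lem:hamiltonian complexity}, which needs three estimates. For the sparsity, the shift matrix $\BF_M$ of \eqref{eq:def FM} makes each of $\BC_x,\BC_y,\BC_z$ carry $\mathscr{O}(1)$ nonzeros per row, so the $3\times3$ block curl $\BM_{B}^{E}$ and hence $H_2$ remain $\mathscr{O}(1)$-sparse, and the source couplings in $H_1$ add only $\mathscr{O}(1)$; thus $s(H)=\mathscr{O}(1)$. For the max-norm I would use $\|H\|_{\max}\le\|H_1\|_{\max}/\triangle p+\|H_2\|_{\max}$: the factor $1/\triangle x=M$ inside each $\BC_i$ gives $\|H_2\|_{\max}=\mathscr{O}(M)$, while a bounded current gives $\|H_1\|_{\max}=\mathscr{O}(1)$ together with $1/\triangle p=\mathscr{O}(M)$, so $\|H\|_{\max}=\mathscr{O}(M)$. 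Finally $t=\mathscr{O}(1)$, and inheriting the regularity hypothesis of Theorem~\ref{thm: complexity 1}, the kink of the warped phase at $p=0$ caps the overall order of accuracy at one, so $\delta\sim M^{-1}\sim N^{-1}$.

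With these in hand, $\tau=s(H)\|H\|_{\max}t=\mathscr{O}(M)$ and $\tau/\delta=\mathscr{O}(M^2)$, whence $\log(\tau/\delta)/\log\log(\tau/\delta)=\mathscr{O}(m/\log m)$ and $\log_{2.5}(\tau/\delta)=\mathscr{O}(m)$. Counting registers, the $6M^d+1$ field/grid unknowns need $dm+\mathscr{O}(1)$ qubits and the $p$-register a further $m+\mathscr{O}(1)$, so $m_H=(d+1)m+\mathscr{O}(1)$ and $m_H+\log_{2.5}(\tau/\delta)=(d+2)m+\mathscr{O}(1)$. Substituting into the gate bound of Lemma~\ref{lem:hamiltonian complexity} yields $\mathscr{O}(M((d+2)m^2+2m)/\log m)$, and adding the QFT cost gives the stated complexity. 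I expect the only real work, as opposed to transcription from Theorem~\ref{thm: complexity 1}, to be the bookkeeping of the sparsity and max-norm when the three shifted difference blocks are assembled into the curl pattern $\BM_{B}^{E}$: one must verify that this assembly inflates neither $s(H)$ beyond $\mathscr{O}(1)$ nor $\|H\|_{\max}$ beyond $\mathscr{O}(M)$, and that careful tracking of the $\mathscr{O}(1)$ additive constants in $m_H$ and in $\log_{2.5}(\tau/\delta)$ is what produces the lower-order $2m$ term here rather than the $4m$ term of Theorem~\ref{thm: complexity 1}.
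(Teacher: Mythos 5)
Your proposal is correct and follows essentially the same route as the paper: the paper's own proof of this theorem is literally the one-line remark that it is identical to that of Theorem~\ref{thm: complexity 1}, and your write-up is exactly that argument carried over, with the only new content being the (correct) bookkeeping that the Yee curl $\BM_B^E$ built from the shift matrices keeps $s(H)=\mathscr{O}(1)$ and $\|H\|_{\max}=\mathscr{O}(M)$ before invoking Lemma~\ref{lem:hamiltonian complexity}. Your additional observations --- that only the $p$-register requires a QFT here, and that the $2m$ versus $4m$ lower-order term traces to the $\mathscr{O}(1)$ register-size constants --- are consistent with, and more explicit than, what the paper provides.
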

	   \begin{proof}
	   	The proof is the same as that in Theorem~\ref{thm: complexity 1},
	   	and we omit it here.
	   	\end{proof}
      
		 It is well known that  Yee's algorithm for the space derivatives satisfies
		\begin{equation}\label{eq:discrete curl and divergence}
			\nabla_h\cdot (\nabla_h\times \BE_h)=0,\quad
			\nabla_h \cdot (\BE_h\times \BB_h) =  
			\BB_h \cdot (\nabla_h \times \BE_h)-\BE_h\cdot (\nabla_h \times \BB_h).		
		\end{equation}
		The discrete divergence of \eqref{eq:semi-Bh} gives the solenoidal  property: 
		\begin{equation}
			\nabla_h \cdot \BB_h(t) = 0,\quad \forall t>0,
		\end{equation}
		as long as the initial magnetic field is divergence free.
		Similarly, the discrete Gauss's law is enforced once 
		the charge  conservation is ensured.  
		We  now check if the quantum simulation  with
		Yee's scheme  still has the property.
		It is easy to find that Equation~\eqref{eq:hamiltonian FDTD} is the discretization of 
		Equation~\eqref{eq:up} with $H_1$ and $H_2$ defined in \eqref{eq:def H1 H2}, and one has 
	$$\Bw = e^{-p}\begin{bmatrix}
		\BE_h\\
		\BB_h \\
		r
	\end{bmatrix}, \quad 
    \Bw(0) = e^{-|p|}\begin{bmatrix}
    \BE_h\\
    \BB_h \\
    1
\end{bmatrix}.
$$
Taking the discrete divergence of \eqref{eq:up} and using \eqref{eq:discrete curl and divergence}, one gets
\begin{align}
	\frac{\D}{\D t} (	e^{-p} \nabla_h \cdot \BE_h)
   =\,&(\nabla_h\cdot  \frac{\BJ_h}{2}) \,\big[\partial_p (e^{-p} r) - e^{-p} r\big], 
   \label{eq:divergence Eh}
   \\
   \frac{\D}{\D t} (	e^{-p} \nabla_h \cdot \BB_h)
   =\,& \textbf{0}. \label{eq:divergence Bh}
\end{align}
From \eqref{eq:divergence Bh}, it can be seen that the discrete magnetic field is divergence-free if we recover $\BB_h$ by Equation \eqref{eq: recover uh integration}. 
Since the discretization of $[\partial_p (e^{-p} r) - e^{-p} r]$ in the $p$-domain dose not equal to $-2 e^{-p} r$ due to the error from the spectral method and  the non-smoothness of the initial value, the discrete Gauss law may not be exactly ensured. However, it is satisfied within  the numerical tolerance error.  This is because the overall error comes from two parts, one from the spatial discretization of Yee's algorithm and the spectral method, and the other from the evolution of the Hamiltonian system. In particular, when the source of the system vanishes, both the discrete Gauss law and the divergence free magnetic field are preserved due to the disappearance of $H_1$. Moreover, the quantum algorithm preserves the energy conservation law when the current density is set to zero.

We  now compare the simulation between the discretization of  \eqref{eq:maxwell matrix} and \eqref{eq:maxwell Eq EB}. Obviously, 
the  computational complexity of the latter is smaller. 
Let $\Cf_i$ denote the $i$-th component of $\Cf$, from \eqref{eq:maxwell matrix 1}, one obtains
\begin{equation}
	\partial_t \Cf_4 = v \nabla \cdot \BB/\sqrt{2 \mu},\quad 
	\partial_t \Cf_8 =- v\nabla \cdot \sqrt{\varepsilon/2} \BE + v \rho/\sqrt{2 \varepsilon},
\end{equation}
in a linear homogeneous medium. It is known that $\Cf_4 \equiv 0$, $\Cf_8\equiv 0$,  discrete Gauss law and Maxwell-Thomson equation hold modulus error from discretization of $\Cf_4$ and $\Cf_8$.
Besides, different from quantum simulation \eqref{eq:hamiltonian FDTD}, the discrete energy conservation cannot be guaranteed when the current density disappears.

	\section{Quantum simulation of Maxwell's equations in a linear  homogeneous medium with physical boundary conditions}
    \label{sec:schr physical boundary conditions}
    
				In this section, we concentrate on how physical boundary conditions can be incorporated into the framework of  Schr$\ddot{\text{o}}$dingerisation, especially for the boundary conditions mentioned above in Section~\ref{sec:A brief review of Maxwell equations}.
			For simplicity, we assume $\varepsilon$ and $\mu$ are constants.
   
			\subsection{Quantum simulation of \eqref{eq:maxwell matrix} with the upwind scheme}
			\label{sec:Quantum simulation BD}
   
			According to the perfect conductor conditions in \eqref{eq:EB perfect conductor}, we consider the corresponding boundary conditions of the variable $\Psi$. Firstly, we give a matrix representation of the boundary conditions to the variable $\Cf$:
			\begin{equation*}
				\begin{bmatrix}
					\BN_1  &\textbf{0}\\
					\textbf{0} &\BN_{2}
				\end{bmatrix} \Cf = \Cb_{pc} \Cf =\textbf{0},\;
			\BN_1 = 
             \begin{bmatrix}
			0     &\frac{n_z}{\sqrt{\varepsilon}}    &\frac{-n_y}{\sqrt{\varepsilon}} &0  \\
				\frac{-n_z}{\sqrt{\varepsilon}}   &0    &\frac{n_x}{\sqrt{\varepsilon}}  &0  \\
				\frac{n_y}{\sqrt{\varepsilon}}    &\frac{-n_x}{\sqrt{\varepsilon}} &0    &0  \\
				0 &0  &0 &1 \\
			\end{bmatrix},\;
		\BN_2 = \begin{bmatrix}
				 0 &0 &0 &0\\
		         0 &0 &0 &0\\
		         0 &0 &0 &0\\
		         0 &0 &0 &1\\
		\end{bmatrix},
			\end{equation*}
			where the unit normal vector to the boundary is denoted by $\Bn=(n_x,n_y,n_z)^{\top}$, and we have used the fact that 
			$\Cf_4=\Cf_8=0$ on $\partial \Omega$.
			Since ${\Psi} = T \Cf$, one gets the boundary condition for ${\Psi}$,
			\begin{align}
				\Cb_{pc}\Cf &= \Cb_{pc}T^{\dagger}\bm{\Psi} = \bbB_{pc}\bm{\Psi}=
    \begin{bmatrix}
					\BB_{pc}^1  &(\BB_{pc}^1)^*\\
					\BB_{pc}^2  &\BB_{pc}^2 
				\end{bmatrix} \bm{\Psi}=
    \textbf{0}.
			\end{align}
			Here the matrices  $\BB_{pc}^1,\, \BB_{pc}^2 \in \bbC^{4\times 4}$ are denoted by
			\begin{equation*}
			\BB_{pc}^1 = \frac{1}{\sqrt{\varepsilon}}\begin{bmatrix}
				-in_z  &-n_y &-n_y,  &-in_z        \\
				n_z    &n_x  &n_x,   &-n_z         \\
				- n_y + in_x  & 0   &0      &n_y + in_x    \\
				  0 &-i\sqrt{\varepsilon} &i\sqrt{\varepsilon} &0\\
			\end{bmatrix},\quad
		\BB_{pc}^2 = \begin{bmatrix}
	     0 &0 &0 &0\\
		 0 &0 &0 &0 \\
		0 &0 &0 &0\\
		 0 &-1 &1 &0
		\end{bmatrix}.
				\end{equation*}
		According to \eqref{eq:EB impedance boundary condition}, the impedance boundary condition is written with the unknowns $\BE$ and $\BB$ as
			\begin{equation}
				\BE\times \Bn + v \Bn \times (\BB\times \Bn) = 0, \quad \text{on}\; \partial \Omega.
			\end{equation}
			Similarly, the matrix representation of the impedance boundary condition corresponding to the variable $\Psi$ is 
			\begin{equation}
				\bbB_{im}{\Psi} =\begin{bmatrix}
					\BB_{im} & \BB_{im}^*
				\end{bmatrix}{\Psi}=\textbf{0}, \quad 
    \BB_{im} = \begin{bmatrix}
                    \BB_{im}^{L1} & \BB_{im}^{R1}\\
                    \BB_{im}^{L2} & \BB_{im}^{R2}
                \end{bmatrix},
			\end{equation}
			where the matrices $\BB_{im}^{Lj},\;\BB_{im}^{Rj}\in \bbC^{4\times 2}$, $j=1,2$ are defined by
            \begin{align*}
            &\BB_{im}^{L1}=\begin{bmatrix}
                n_x n_y \sqrt{\mu}v + i\sqrt{\mu}v(n_y^2 + n_z^2) - \frac{in_z}{\sqrt{\varepsilon}}  &\frac{-n_y}{\sqrt{\varepsilon}} + i\sqrt{\mu} v n_x n_z \\
				\frac{n_z}{\sqrt{\varepsilon}} - \sqrt{\mu}v(n_x^2 + n_z^2) - i\sqrt{\mu}vn_x n_y    & \frac{n_x}{\sqrt{\varepsilon}} + i\sqrt{\mu}vn_y n_z \\
				 n_y n_z \sqrt{\mu}v - \frac{n_y}{\sqrt{\varepsilon}} + i\frac{n_x}{\sqrt{\varepsilon}} -i\sqrt{\mu}v n_x n_z & -i\sqrt{\mu}v(n_x^2 + n_y^2)\\
					0         &-i       
                \end{bmatrix}, \;
                \BB_{im}^{L2}=\begin{bmatrix}
                    0         &0           \\
					0         &0       \\
					0         &0           \\
					0        &-1          
                \end{bmatrix},
            \\
            &\BB_{im}^{R1}=\begin{bmatrix}
                \frac{-n_y}{\sqrt{\varepsilon}} + i\sqrt{\mu} v n_x n_z
                &  n_x n_y \sqrt{\mu}v + i\sqrt{\mu}v(n_y^2 + n_z^2) - \frac{in_z}{\sqrt{\varepsilon}}  \\
				\frac{n_x}{\sqrt{\varepsilon}} +  i\sqrt{\mu} v n_y n_z
                & \frac{n_z}{\sqrt{\varepsilon}} - \sqrt{\mu}v(n_x^2 + n_z^2) - i\sqrt{\mu}vn_x n_y  \\
				 -i\sqrt{\mu}v(n_x^2 + n_y^2)
                & n_y n_z \sqrt{\mu}v - \frac{n_y}{\sqrt{\varepsilon}} + i\frac{n_x}{\sqrt{\varepsilon}} -i\sqrt{\mu}v n_x n_z \\
					i         &0       
                \end{bmatrix},\;
                \BB_{im}^{R2}=\begin{bmatrix}
                    0         &0           \\
					0         &0       \\
					0         &0           \\
					1        &0          
                \end{bmatrix}.  
            \end{align*}
			
			In the following, we only consider the 1-D case for the $z$-transverse electric (TE) model in the domain $[0,1]$.  For the three-dimensional case, a similar approach can be adopted straightforwardly.
			The electric field is assumed to have a longitudinal component $E_x$, and a transverse component $E_y$, i.e. $\BE=(E_x(x,t),E_y(x,t),0)$. 
			The magnetic field is aligned with the $z$ direction and its magnitude is denoted by $B_z$, i.e. $\BB=(0,0,B_z(x,t))$. 
			The reduced Maxwell system is written as
			\begin{equation}
				\frac{\partial E_x}{\partial t}= -J_x,\quad
				\frac{\partial E_y}{\partial t} +\frac{\partial B_z}{\partial x}= -J_y,\quad
				\frac{\partial B_z}{\partial t} +\frac{\partial E_y}{\partial x} = 0. \label{eq:1DEB}
			\end{equation}			
			The unit outward normal vector to the left of the computational domain is $[-1,0,0]^{\top}$, and to the right is $[1,0,0]^{\top}$.
			Without loss of generality, the boundary condition on the left is perfect conductor and the right is  impedance boundary condition.
			It is noted that $\Sigma_1$ can be diagonalized via a unitary matrix, that is 
			$\Sigma_1 = U^{\top} \Lambda_1 U $, with 
			\begin{equation}
				\Lambda_1 = \textbf{1} \otimes \begin{bmatrix}
					1 &0\\
					0 &-1
				\end{bmatrix},\quad 
				U = \textbf{1} \otimes \frac{1}{\sqrt{2}}
				\begin{bmatrix}
					1 & 1\\
					1 &-1
				\end{bmatrix}. 
			\end{equation}
			We rewrite the system of the variables $\tilde \Psi =(\textbf{1}\otimes U) \Psi$ as
			\begin{equation} \label{eq:tilde psi}
				\frac{\D }{\D t} \tilde \Psi  = -v \bm{\Lambda_1} \partial_x \tilde \Psi - \tilde \FJ ,
			\end{equation}
			where $\bm{\Lambda_1} = \textbf{1}\otimes \Lambda_1$ and $\tilde \FJ = (\textbf{1}\otimes U)\FJ$.
			The boundary condition of the perfect conductor on the left side of the domain is 
			computed by 
			\begin{equation}\label{eq:pc BD Psi}
				\bbB_{pc} \Psi = \bbB_{pc} (\textbf{1} \otimes U^{\top}) \tilde \Psi 
				= \tilde \bbB_{pc}  \tilde \Psi  = \textbf{0}.
			\end{equation}
			Simple calculation gives
			\begin{equation}
				\begin{bmatrix}
					\tilde\psi_0\\
					\tilde\psi_2\\
					\tilde\psi_4\\
					\tilde\psi_6
				\end{bmatrix}
				= \frac {1}{2\sqrt{2}}\begin{bmatrix}
					1 &1 &1 &-1\\
					-1 &-1 &1 &-1\\
					1 &-1 &1 &1\\
					1 &-1 &-1 &-1
				\end{bmatrix}
				\begin{bmatrix}
					\tilde\psi_1\\
					\tilde\psi_3\\
					\tilde\psi_5\\
					\tilde\psi_7
				\end{bmatrix}
				=B_{E2O} 	\begin{bmatrix}
					\tilde\psi_1\\
					\tilde\psi_3\\
					\tilde\psi_5\\
					\tilde\psi_7
				\end{bmatrix},
			\end{equation}
			where $\tilde \psi_i$ is the $i$-th value of $\tilde \Psi$, $i\in [8]$. Similarly, we have the impedance boundary condition on the right side of the domain:
			\begin{equation}
				\begin{bmatrix}
					\tilde\psi_1\\
					\tilde\psi_3\\
					\tilde\psi_5\\
					\tilde\psi_7
				\end{bmatrix}
				= \frac {1}{2\sqrt{2}}\begin{bmatrix}
					1 &-1 &-\frac{v-1}{v+1} &-\frac{v-1}{v+1}\\
					1 &-1 & \frac{v-1}{v+1} & \frac{v-1}{v+1}\\
					-\frac{v-1}{v+1} &-\frac{v-1}{v+1} &1 &-1\\
					\frac{v-1}{v+1} & \frac{v-1}{v+1} &1 &-1
				\end{bmatrix}
				\begin{bmatrix}
					\tilde\psi_0\\
					\tilde\psi_2\\
					\tilde\psi_4\\
					\tilde\psi_6
				\end{bmatrix}
				=B_{O2E}\begin{bmatrix}
					\tilde\psi_0\\
					\tilde\psi_2\\
					\tilde\psi_4\\
					\tilde\psi_6
				\end{bmatrix}.
			\end{equation} 
			We consider the upwind scheme for Equation \eqref{eq:tilde psi} on a uniform grid with space size $\triangle x = M^{-1}$.  The collection of the new variables $\tilde \Psi $ 
			and the source term are denoted by 
			\begin{equation}
				\Bu =\sum_{i\in [8]} \ket{i}\otimes 
				\big(\sum_{j_i \in [M]} \tilde{\psi}_{i,j_i}\ket{j_i}\big),
				\qquad
				\Bb = \sum_{i\in [8]} \ket{i} \otimes
				\big(\sum_{j_i\in [M]}\tilde \FJ_{i}(x_{j_i})\ket{j_i}\big),
			\end{equation}
			where $\tilde \psi_{i,j_i}$ is the approximation of $\tilde \psi_i(x_{j_i})$, and
			\begin{equation}
				x_{j_i} = (j_i+1) \triangle x, \quad i=0,2,4,6,\quad 
				x_{j_i} = j_i\triangle x, \quad i = 1,3, 5,7.
			\end{equation} 
			We define the finite difference operator $D_{x,L}$ when $i$ is odd, and $D_{x,R}$ if $i$ is even,
			\begin{equation*}
				D_{x,L} =  \frac{1}{\triangle x}(\textbf{1}_M- \sum_{i=1}^{M-1}\ket{i}\bra{i-1}), \quad 
				D_{x,R} = \frac{1}{\triangle x}(-\textbf{1}_M + \sum_{i=1}^{M-1} \ket{i-1}\bra{i}),
			\end{equation*}  
			It is easy to find  $D_{x,R} = -D_{x,L}^{\top}$. 
			Finally, one gets the
			system of  Equation~\eqref{eq:ODE} as 
			\begin{equation}\label{eq:AD IMPE}
				A =   \textbf{1}_4 \otimes \begin{bmatrix}
					-vD_{x,L} &0\\
					0 & vD_{x,R}  
				\end{bmatrix}
				+\frac{v}{\triangle x}
				\sum_{i,j=1}^4\big( (B_{E2O})_{i,j} \ket{O_{si}}\bra{E_{sj}} 
				+(B_{O2E})_{i,j} \ket{E_{ei}}\bra{O_{ej}}\big),
			\end{equation}
			where $(B_{{O2E}})_{i,j} = \bra{i}B_{O2E}\ket{j}$,  $(B_{{E2O}})_{i,j} = \bra{i}B_{E2O}\ket{j}$, and for $k\in [4]$,
			\begin{equation*}
				O_{sk} = 2Mk, \quad  O_{ek} = (2k+1)M-1,\quad
				E_{sk} = (2k+1)M,\quad  E_{ek} = 2(k+1)M-1.
			\end{equation*}

			\subsection{Quantum simulation of \eqref{eq:maxwell} with Yee's algorithm}
   
             Following the algorithm in Section~\ref{sec:Quantum simulation with Yee algorithm}, we handle the physical boundary condition with  Yee's algorithm.
            The discrete variables of the electromagnetic fields are
             \begin{equation}
             	\BE_j = (E_{x,j+\frac 12},E_{y,j},0),\quad 
             	\BB_j = (0,0,B_{z,j+\frac 12}).
             \end{equation}
			 The perfect conductor boundary condition on the left side of the domain for the discrete variables is obtained by
			 \begin{equation}
			 	E_{y,0} = 0.
			 \end{equation}
		    The impedance boundary condition on the right side of the domain for the discrete variables is 
		    \begin{equation}
		    	v (B_{z,M+\frac 12} + B_{z,M-\frac 12})/2- E_{y,M} = 0. 
		    	\end{equation}
			The collection of the discrete variables and the source term are denoted by
			\begin{equation}
				\Bu =\begin{bmatrix}
					\sum\limits_{j\in [M]} E_{x,j+\frac 12}\ket{j}\\
					\sum\limits_{j\in [M]} E_{y,j+1}\ket{j}\\
					\sum\limits_{j\in [M]} B_{z,j+\frac 12}\ket{j}
				\end{bmatrix}  ,\quad 
		 \Bb = \begin{bmatrix}
		 	-\sum\limits_{j\in [M]} J_{x}(x_{j+\frac 12})\ket{j}\\
		 	-\sum\limits_{j\in [M]} J_{y}(x_{j+1})\ket{j}\\
		 	\textbf{0}
		 	\end{bmatrix}.
			\end{equation}
			Using   Yee's algorithm,  one gets the system of Equation \eqref{eq:ODE} with the matrix $A\in \bbC^{3M\times 3M}$:
			\begin{equation}
			A = \begin{bmatrix}
				\textbf{0} &\textbf{0} &\textbf{0}\\
				\textbf{0} &\textbf{0} &-vD_{x,R} \\
				\textbf{0} &-vD_{x,L}     &\textbf{0}
				\end{bmatrix}
			 -\frac{v }{\triangle x} (\frac{2}{v}\ket{2M-1}\bra{2M-1}-\ket{2M-1}\bra{3M-1}).
			\end{equation} 
		Comparing with the algorithm in Section~\ref{sec:Quantum simulation BD}, it is easier to handle the physical boundary condition 
		for the quantum algorithm using  Yee's algorithm. 
  
  	\section{Quantum simulation of Maxwell's equations  for a linear  inhomogeneous medium} 
   \label{sec:schr inhomogeneous medium}
   
		In this section, the permittivity and permeability of  the medium may depend on the space--including even {\it discontinuous} functions. For simplicity, we may assume  the periodic boundary condition for Maxwell equations. The mesh for space discretisation is the same as 
			in Section~\ref{sec:periodic matrix homogeneous}.
			The collections of the parameters $\bar{\varepsilon}$, $\bar{\mu}$  and $v$ are defined by 
			\begin{equation}
			\bar{\bm{\varepsilon}} = \sum_{\Bj} \bar{\varepsilon}(t,x_{\Bj})\ket{\Bj},\quad
			\bar{\bm{\mu}} =\sum_{\Bj} \bar{\mu}(t,x_{\Bj})\ket{\Bj},\quad
			\bm{v} = \sum_{\Bj} v(t,x_{\Bj})\ket{\Bj}.
			\end{equation}
			Using the spectral method, one obtains the following ODE system for quantum simulation
\begin{equation} \label{eq:maxwell simple 2}
	\frac{\D}{\D t} \bm{\psi}_h =
	\bigg(-i \begin{bmatrix}
		\tilde{\bbQ}_{v1} &\textbf{0}\\
		\textbf{0} &\tilde{\bbQ}_{v1}
	\end{bmatrix}
+\frac 12\begin{bmatrix}
	\tilde{\bbV}_{v,11} &\tilde{\bbV}_{v,12}\\
	\tilde{\bbV}_{v,21} &\tilde{\bbV}_{v,22}
\end{bmatrix}\bigg)\bm{\psi}_h - \BJ_h, 
\end{equation}
where $\bm{\psi}_h$ and $\BJ_h$  are defined as in \eqref{eq:psih} and \eqref{eq:Jh}, and 
\begin{equation}\label{eq:Qv}
	\tilde{\bbQ}_{v1} = \Sigma_1 \otimes (\text{diag}\{\bm{v}\}\BP_1)
	+\Sigma_2 \otimes (\text{diag}\{\bm{v}\}\BP_2) +\Sigma_3 \otimes (\text{diag}\{\bm{v}\}\BP_3).
\end{equation}
The matrix $\tilde{\bbV}_{v,ij}$, $i,j=1,2$ are defined by
\begin{align}
	\tilde{\bbV}_{v,11} &= \sum_i \sigma_i \otimes \textbf{1}\otimes \tilde{V}_i^{+}, \quad \quad  \quad 
	\tilde{\bbV}_{v,12}=\sum_i (\sigma_i \sigma_2)\otimes \sigma_2 \otimes \tilde{V}_i^{-} \\
	\tilde{\bbV}_{v,21}&=\sum_i (\sigma_i^* \sigma_2) \otimes \sigma_2 \otimes \tilde{V}_i^{-}, \quad
	\tilde{\bbV}_{v,22}=\sum_i  \sigma_i^*\otimes \textbf{1} \otimes 
	\tilde{V}_{i}^{+},
\end{align}
and the matrices $	\tilde{V}_i^{\pm}$ $i=1,2,3$ are defined by 
\begin{align*}
	\tilde{V}_i^{\pm} =  \text{diag} \big\{\Bv \odot\big(\BP_i (\bar{\bm{\varepsilon}}{\pm}\bar{\bm{\mu}}) \big)\big\} ,
\end{align*}
where $\Bc=\Ba\odot\Bb$ such that $c_i=a_i b_i$, and $c_i$, $a_i$, $b_i$ are the $i$-th component of vector 
$\Bc$, $\Ba$ and $\Bb$, respectively.
Next we use the Schr$\ddot{\text{o}}$dingerisation approach to complete the quantum simulation.

Let us consider media with constant magnetic permeability $\mu=\mu_0$ but with scalar spatially varying permittivity $\varepsilon=\varepsilon(x)$.
For simplicity, let the electromagnetic wave propagate from 
a region of constant  permittivity $\varepsilon_1$ to a region of higher constant
permittivity $\varepsilon_1$, where 
\begin{equation}
    \varepsilon(x)  = \begin{cases}
        \varepsilon_1, \;x<L\\
        \varepsilon_2, \;x>L
    \end{cases}.
\end{equation}
Follow the idea in ~\cite{va20},  we approximate permittivity index profile  by the hyperbolic tagent function-profile
\begin{equation}
    \varepsilon(x) = \frac{\varepsilon_1+\varepsilon_2}{2}-\frac{\varepsilon_1-\varepsilon_2}{2}\tanh{\beta(x-L)},
\end{equation}
where $\beta$ controls the thickness of the boundary region between the two media. Periodic boundary conditions are enforced by adding a small buffer region after the end of the grid so that the refractive index is periodic (See Fig.~\ref{fig:dis_nx}). Finally, we apply the Schr$\ddot{\text{o}}$dingerisation method 
on \eqref{eq:maxwell simple 2}.
\begin{figure}
    \centering
		\subfigure[$\text{permittivity}$\; \text{index}\;]{
		\includegraphics[width=0.42\linewidth]{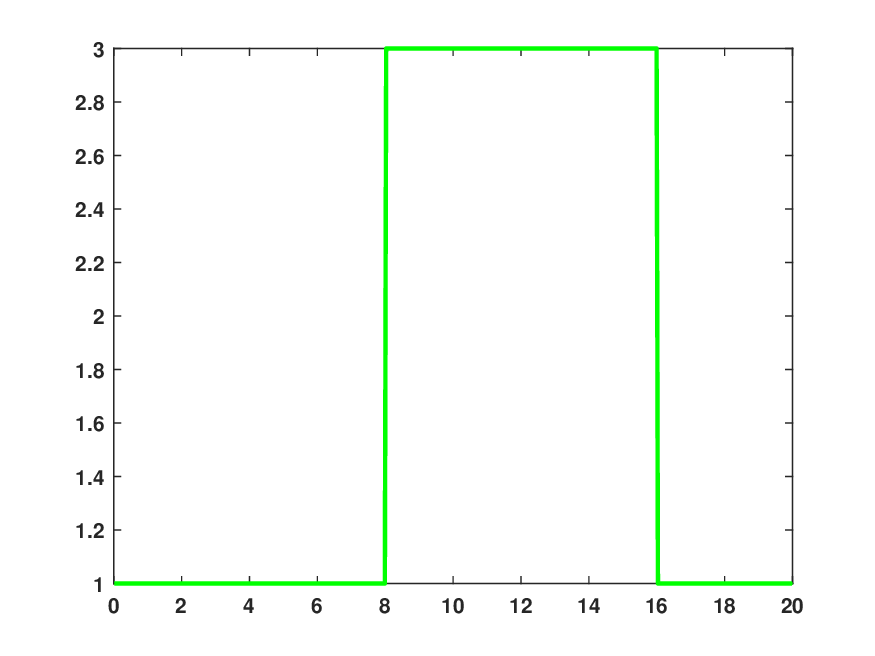}\label{fig:dis_para1}}
		\subfigure[$\text{approximate}\; $\text{permittivity}$\; \text{index}$]{
		\includegraphics[width=0.42\linewidth]{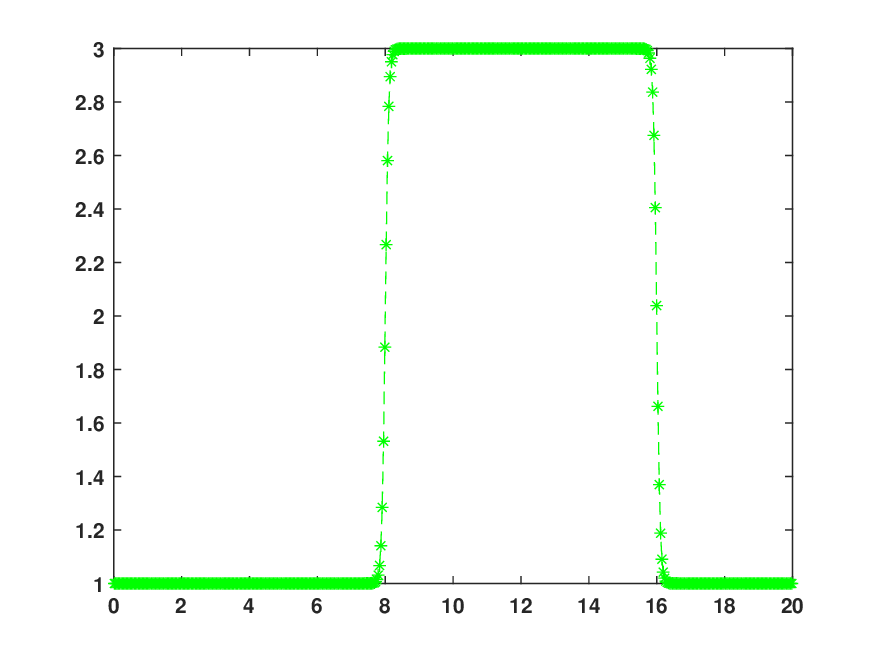}\label{fig:dis_para2}}
    \caption{Permittivity index}
    \label{fig:dis_nx}
\end{figure}

However, the matrix $\tilde \bbQ_v$ in \eqref{eq:maxwell simple 2} is not sparse, which may affect the complexity of quantum algorithms, and this approach may not simulate the complex media and physical boundary conditions efficiently. We instead consider the Schr$\ddot{\text{o}}$dingerisation combined with the immersed boundary method (or immersed interface method) .
For the complex media which has two dielectric material $\Omega^{1}$ with parameters $\varepsilon_1$, $\mu_1$ and $\Omega_2$  with parameters $\varepsilon_2$, $\mu_2$,  and satisfies $\Omega=\Omega_1\cup \Omega_2$,
We define 
the interface conditions for the electric and magnetic fields on the interface $\Gamma$ are given by
\begin{equation}\label{eq:interface codition}
	\hat{\Bn}\cdot [ \varepsilon \BE]=0,\quad 
	 \hat{\Bn}\cdot [\BB] =0, \quad
	\hat{\Bn} \times[ \BE]=0,\quad  \hat{\Bn} \times [\BB/\mu]=0, \quad \text{on}\,\Gamma,
\end{equation}
where $\hat{\Bn}$ is the unit normal of the interface pointing to $\Omega_1$ and  the jumps  on the interface are denoted by 
\begin{equation}
	[\BE] = \BE|_{\Omega_1} - \BE|_{\Omega_2},\quad
	[\BB] = \BB|_{\Omega_1} - \BB|_{\Omega_2},\quad \text{on}\,\Gamma.
\end{equation}
Combining with the interface condition for $\Cf_4$ and $\Cf_8$, i.e. 
\begin{equation}
	[\Cf_4]=0,\quad [\Cf_8]=0,
\end{equation}
the jump conditions in terms of the variable $\Cf$  are
\begin{equation}
	R_1 \Cf^1-R_2\Cf^2=\begin{bmatrix}
		R_1^{11} &\textbf{0}\\
		\textbf{0} & R_{1}^{22}
	\end{bmatrix}\Cf^1
-		\begin{bmatrix}
	R_2^{11} &\textbf{0}\\
	\textbf{0} & R_{2}^{22}
\end{bmatrix}\Cf^2 =\textbf{0},
\end{equation}
where the matrices $R_j^{11}$ and $R_j^{22}$, $j=1,2$ are defined by
\begin{equation*}
	R_j^{11} = \begin{bmatrix}
		\sqrt{\varepsilon_j} n_x &\sqrt{\varepsilon_j}n_y &\sqrt{\varepsilon_j} n_z &1\\
		0 &-\frac{n_z}{\sqrt{\varepsilon_j}} &\frac{n_y}{\sqrt{\varepsilon_j}} &1\\
		\frac{n_z}{\sqrt{\varepsilon_j}} &0 &-\frac{n_x}{\sqrt{\varepsilon_j}} &1\\
		-\frac{n_y}{\sqrt{\varepsilon_j}} &\frac{n_x}{\sqrt{\varepsilon_j}} &0 &1 \\
	\end{bmatrix},\quad
R_j^{22} = \begin{bmatrix}
	\sqrt{\mu_j} n_x &\sqrt{\mu_j}n_y &\sqrt{\mu_j} n_z &1\\
	0 &-\frac{n_z}{\sqrt{\mu_j}} &\frac{n_y}{\sqrt{\mu_j}} &1\\
	\frac{n_z}{\sqrt{\mu_j}} &0 &-\frac{n_x}{\sqrt{\mu_j}} &1\\
	-\frac{n_y}{\sqrt{\mu_j}} &\frac{n_x}{\sqrt{\mu_j}} &0 &1 \\
\end{bmatrix}.
	\end{equation*}
Using the transformation matrix $T$ and unitary matrix $U$, one gets the representation of interface conditions on $\Gamma$ in terms of $\tilde \Psi$ as
\begin{equation}\label{eq:tilde jump condition}
	\tilde R_1 \tilde{\Psi}^1 -\tilde R_2 \tilde \Psi^2 =0,
\end{equation}
where $\tilde R_j = R T^{\dagger} U^{\dagger}$.
An upwinding embedded boundary method \cite{CD03} can be applied for the wave equation~\eqref{eq:tilde psi} for the space  discretization, and then   
the Schr$\ddot{\text{o}}$dingerisation approach is used for the quantum simulation.

\section{Continuous-variable formulation} \label{sec:continuous-variable formulation}

We remark that the Schr\"odingerisation framework is not only applicable to qubit-systems, but also to {\it continuous-variable} 
 (CV) quantum systems. The continuous-variable analogue of a qubit is a qumode. The Schr\"odingerisation framework in the qumode representation is introduced in \cite{CVPDE2023}. 

Unlike a qubit, a  CV quantum state, or {\it `qumode'}, spans an infinite-dimensional Hilbert space. A qumode is the quantum analogue of a continuous classical degree of freedom. A qumode is acted upon by observables with a continuous spectrum, such as the position $\hat{x}$ and momentum $\hat{p}$ observables of a quantum particle. Its eigenbasis can be chosen to be for instance $\{|x\rangle\}_{x \in \mathbb{R}}$, which are the eigenstates of $\hat{x}$. It forms a complete basis so $\int |x\rangle \langle x|=I$. In this basis, a qumode can be expressed as, for instance, $|u(t)\rangle=(1/\|\vect{u}(t)\|)\int u(t,x)|x\rangle dx$, where $\|\vect{u}(t)\|^2=\int dx |u(t, x)|^2$ is the normalisation constant. A system of $m$-qumodes is a tensor product of $m$ qumodes. The qumode can also be acted upon by quadrature operators like the momentum $\hat{p}$ operator, where $[\hat{x}, \hat{p}]=i$. 

Maxwell's equations govern the dynamics of the electric $E_{x,y,z}$ and magnetic fields $B_{x,y,z}$, in the presence of charge density $\rho$ and current density $J_{x,y,z}$. These are all continuous quantities and it is interesting to ask if it is possible to represent this information in a continuous manner in a quantum device without first discretising. This could be made possible through a qumode representation. In addition,  it is easy to find from \eqref{eq:Qv} that  the matrix $\bbQ_v$ is not sparse due to the discretization of varying velocity.
However, this defect disappears in a continuous-variable quantum system.
It can be seen from \eqref{eq:maxwell simple 2}-\eqref{eq:Qv} that the advantages of the matrix representation based on Riemann-Silberstein vectors no longer exists for a linear inhomogeneous medium in our Schr\"odingerisation framework. Thus we simulate  Equation~\eqref{eq:maxwell matrix 1} instead of Equation~\eqref{eq:maxwell matrix} in this section.

In the qumode representation, vectors $\vect{F}$ in bold, e.g., $\vect{E}_{x,y,z}$, $\vect{B}_{x,y,z}$, $\vect{J}_{x,y,z}$ and $\vect{\rho}$ represents $\vect{F}=\iiint F(x,y,z)|x\rangle|y\rangle |z\rangle dx dy dz$, which is a quantum system consisting of 3 qumodes. We define
\begin{align*}
	& \vect{\mathcal{F}}=\sqrt{\varepsilon} \vect{E}_x |0\rangle+\sqrt{\varepsilon} \vect{E}_y |1\rangle+\sqrt{\varepsilon} \vect{E}_z |2\rangle+
    \frac{1}{\sqrt{\mu}} \vect{B}_x|4\rangle+\frac{1}{\sqrt{\mu}} \vect{B}_y|5\rangle+\frac{1}{\sqrt{\mu}} \vect{B}_z|6\rangle, \nonumber \\
	& \vect{\mathcal{J}}=\vect{J}_x|0\rangle+\vect{J}_y|1\rangle+\vect{J}_z|2\rangle-\vect{v}\vect{\rho}|8\rangle, \nonumber \\
	&	\mathcal{D}=\begin{pmatrix}
		0 & i\hat{p}_z & -i\hat{p}_y & i\hat{p}_x \\
		-i\hat{p}_z & 0 & i\hat{p}_x & i\hat{p}_y \\
		i\hat{p}_y & -i\hat{p}_x & 0 & i\hat{p}_z \\
		-i\hat{p}_x & -i\hat{p}_y & -i\hat{p}_z & 0
	\end{pmatrix}=-\hat{p}_z \otimes \textbf{1} \otimes \sigma_2-\hat{p}_x\otimes \sigma_2 \otimes \sigma_1+\hat{p}_y \otimes \sigma_2 \otimes \sigma_3,
\end{align*}
where $|0\rangle, |1\rangle,...,|8\rangle$ represent the $9$ possible states in the computational basis consisting of 3 qubits. In the continuous-variable framework, we can make the replacement $\partial_{x,y,z} \leftrightarrow -i\hat{p}_{x,y,z}$ where $\hat{p}$ is the momentum operator. In the continuous-variable framework, we can also make the replacement $x \leftarrow \hat{x}$ where $\hat{x}$ is the position operator obeying $[\hat{x},\hat{p}]=i$.
Then the following equation governing $\vect{\mathcal{F}}$ with an inhomogeneous term holds:
\begin{align*}
	 \frac{d\vect{\mathcal{F}}}{dt}&=-i\vect{A} \vect{\mathcal{F}}-\vect{\mathcal{J}} \nonumber \\
\vect{A}&=-i \sigma_2 \otimes \frac{v}{2} \mathcal{D}(2I-\bar{\varepsilon}-\bar{\mu})
	-\sigma_1\otimes\frac{v}{2}\mathcal{D}(\bar{\varepsilon}-\bar{\mu}),
\end{align*}
where $v=v(\hat{x},\hat{y},\hat{z})$, $\bar{\varepsilon} =\bar{\varepsilon}(\hat{x},\hat{y},\hat{z})$ and $\bar{\mu} = \bar{\mu}(\hat{x},\hat{y},\hat{z})$.
We define an operator $\hat{\eta}$ such that $\hat{\eta}\ket{\eta}=\eta \ket{\eta}$.
One can then apply our Schr\"odingerisation formulation to the inhomogeneous case. We Schr\"odingerise the system by dilating $\vect{\mathcal{F}} \rightarrow \vect{y}=\vect{\mathcal{F}} \otimes |0\rangle+\vect{\mathcal{J}}\otimes |1\rangle$ and  transform $\vect{y} \rightarrow \tilde{\vect{v}}$ (Schr\"odingerisation procedure) to obtain
the Hamiltonian matrix in \eqref{eq:}:
\begin{align*}
	\vect{H}=\vect{A}_2 \otimes \frac{1}{2}(\textbf{1}+\sigma_3) \otimes \hat{\eta}-\frac{\textbf{1}}{2}\otimes \sigma_1 \otimes \hat{\eta}+\vect{A}_1 \otimes \frac{1}{2}(\textbf{1}+\sigma_3) \otimes \textbf{1}+\frac{\textbf{1}}{2}\otimes \sigma_2 \otimes \textbf{1}=\vect{H}^{\dagger}, 
\end{align*}
where $\vect{A}_1=(\vect{A}+\vect{A}^{\dagger})/2=\vect{A}_1^{\dagger}$, $\vect{A}_2=i(\vect{A}-\vect{A}^{\dagger})/2=\vect{A}_2^{\dagger}$. 
This is quantum simulation on a system of 4 qumodes and 4 qubits.

\section{Numerical simulation} \label{sec:numerical simulation}

    For the numerical tests, we use the classical computer to simulate Hamilton system to validate the feasibility of the algorithms above. The solution for \eqref{eq:} at time $T$ is obtained by  
    \begin{equation*}\tilde{\Bv}=e^{-iH T}\tilde{\Bv}_0,\end{equation*}
    if $H$ is independent of $t$.  Otherwise, the backward Euler method is used to approximate the Hamiltonian system. First we use different quantum algorithms to simulate the propagation of the electromagnetic field without the source term, and then we consider the Schr$\ddot{\text{o}}$dingerisation method applied to physical boundary conditions. Finally, we simulate Maxwell's equations in media with material interfaces.
   
    \subsection{Periodic boundary conditions}
    
    In this test, we consider  $2D$ Maxwell's equations in the domain $\Omega=[0,2]^2$ for a $z$-transverse magnetic (TM) wave,
    where the magnetic field $\BB$ is transverse to the $z$-direction and electric field $\BE$
    has only one component along  the $z$-direction. 
    We set the cell number as $M=2^5$, $N=2^7$.
    The exact solution to the system is
    \begin{equation*}
    	E_z = \sin(\pi(x + 2y + {\sqrt{5}}t)),\quad
    	B_x = -2E_z/\sqrt{5},\quad
        B_y =  E_z/\sqrt{5}.
    	\end{equation*}
			 \begin{figure}[htbp]
			 	\centering
			 	\subfigure[$E_z$.]{
			 		\includegraphics[width=0.32\linewidth]{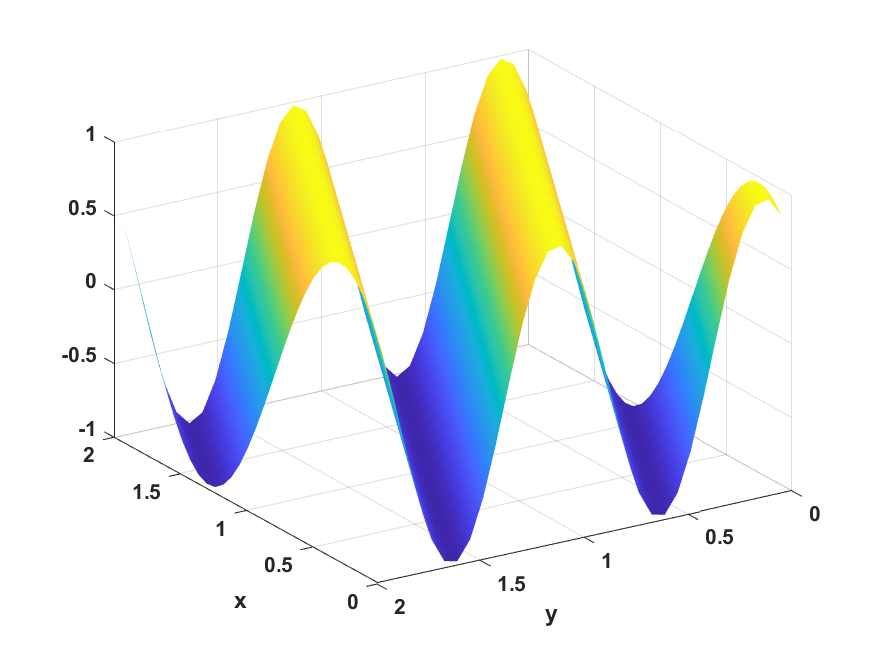}\label{fig:Ez_Ep_2D}}
			 	\subfigure[$E_z(x^*,y)$.]{
			 		\includegraphics[width=0.3\linewidth]{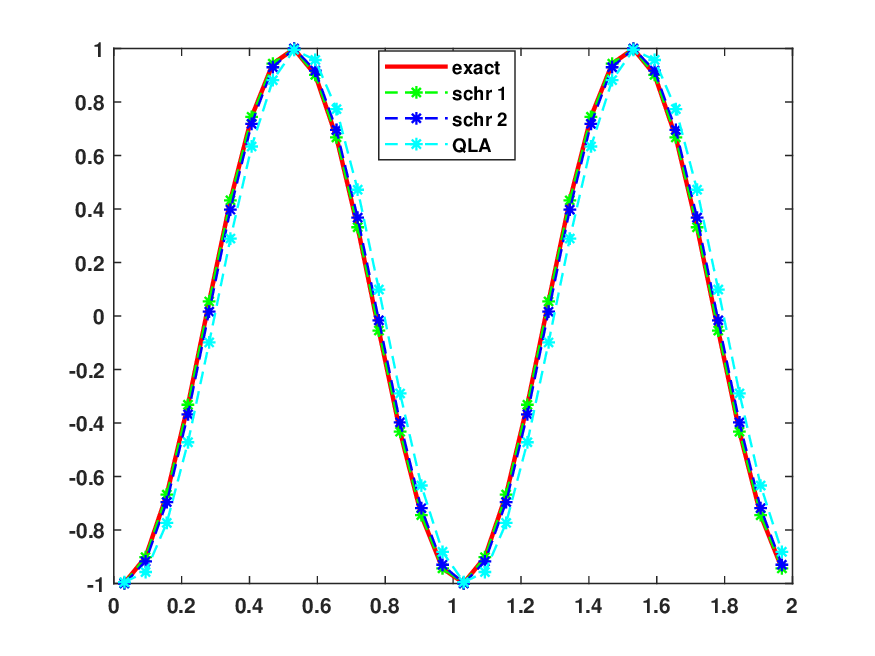}\label{fig:Ez_fixX}}
			 	\subfigure[$E_z(x,y^*)$]{
			 		\includegraphics[width=0.3\linewidth]{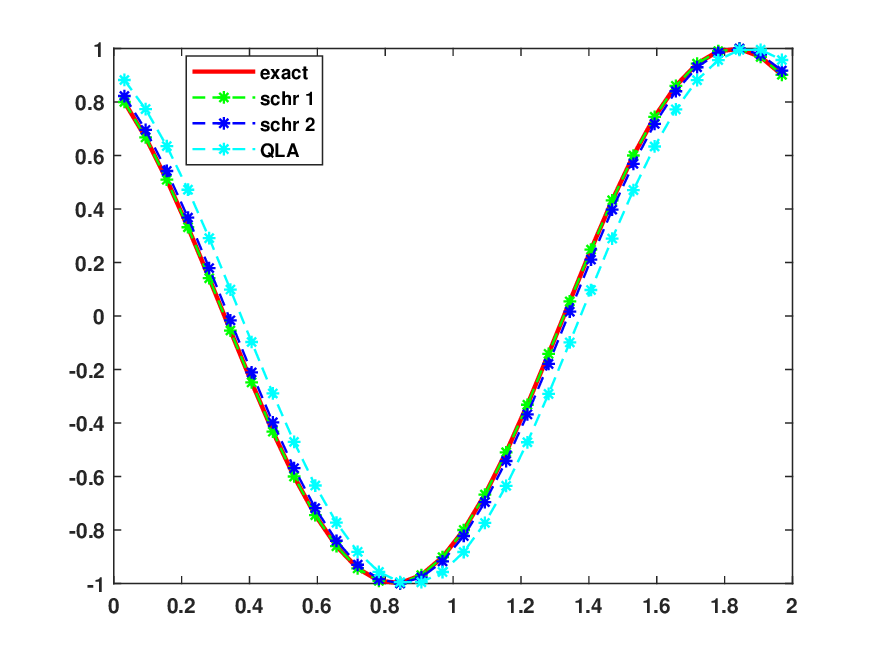}\label{fig:Ez_fixY}}
		 		\subfigure[$B_x$.]{
		 			\includegraphics[width=0.32\linewidth]{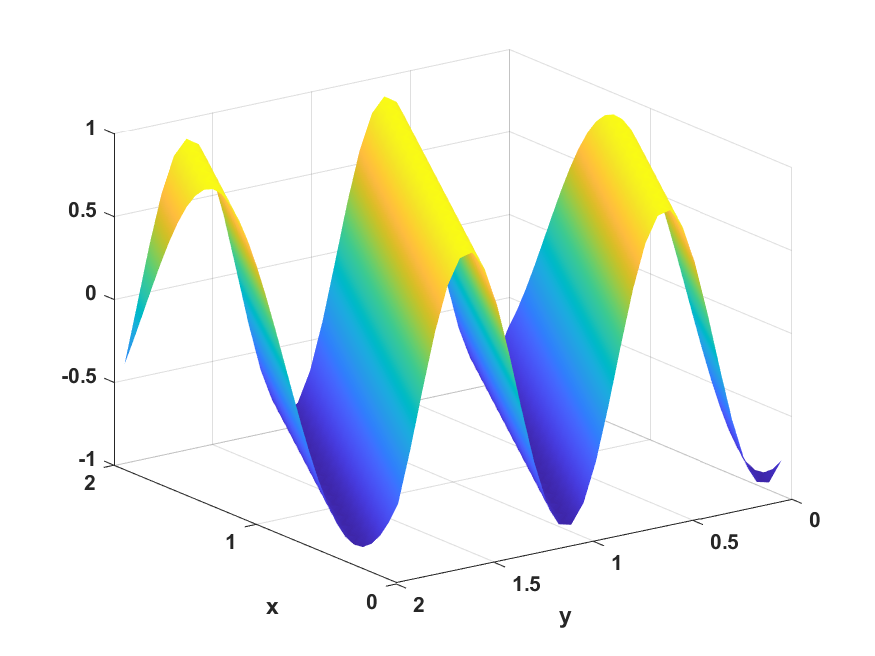}\label{fig:Bx_Ep_2D}}
		 		\subfigure[$B_x(x^*,y)$.]{
		 			\includegraphics[width=0.3\linewidth]{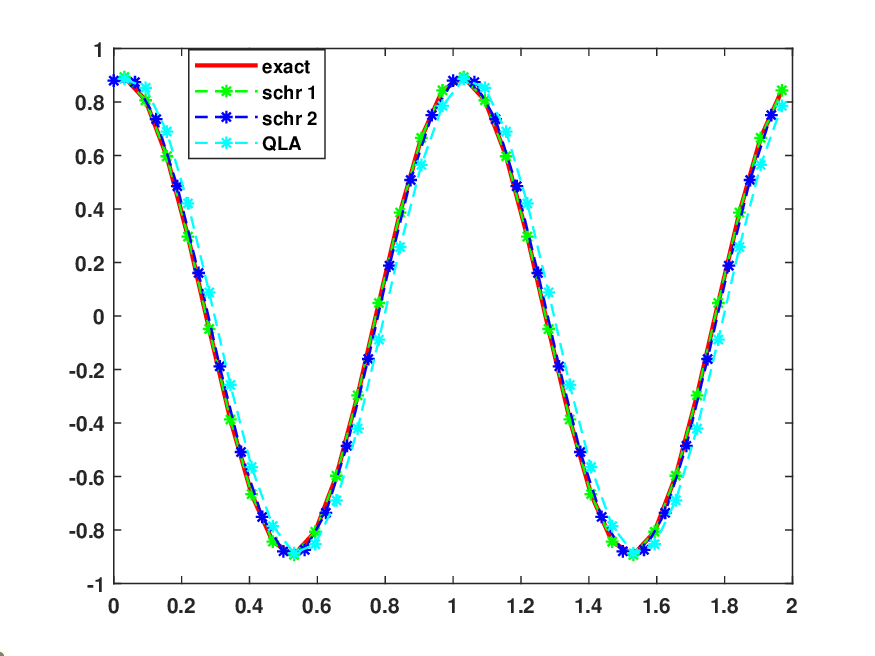}\label{fig:Bx_fixX}}
		 		\subfigure[$B_x(x,y^*)$]{
		 			\includegraphics[width=0.3\linewidth]{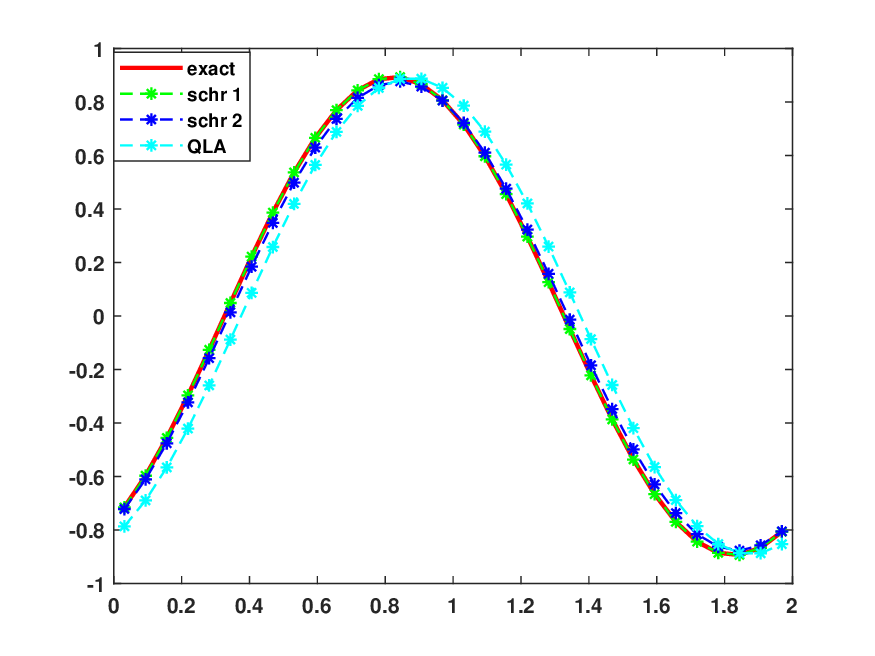}\label{fig:Bx_fixY}}
	 			\subfigure[$B_y$.]{
	 				\includegraphics[width=0.32\linewidth]{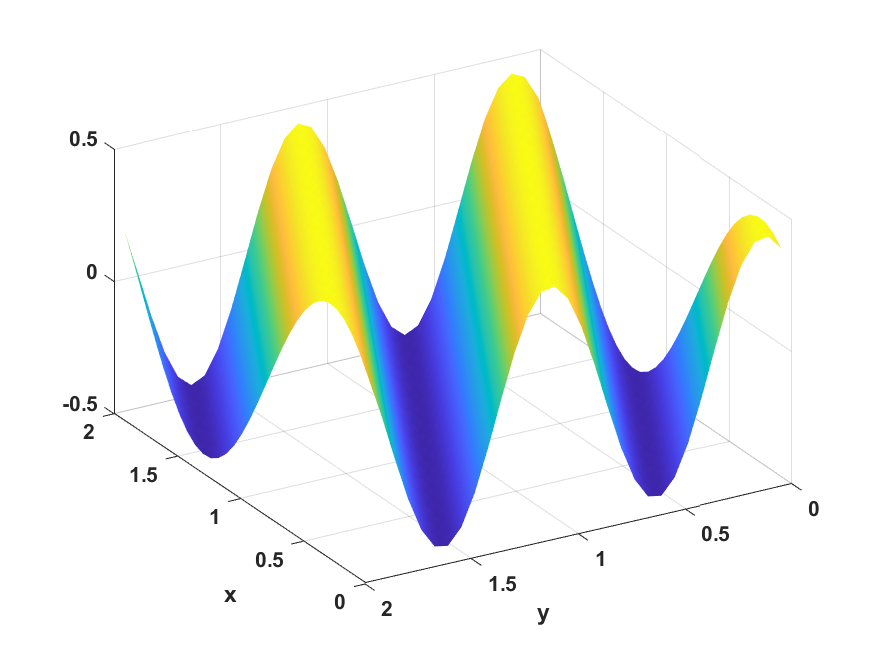}\label{fig:By_Ep_2D}}
	 			\subfigure[$B_y(x^*,y)$.]{
	 				\includegraphics[width=0.3\linewidth]{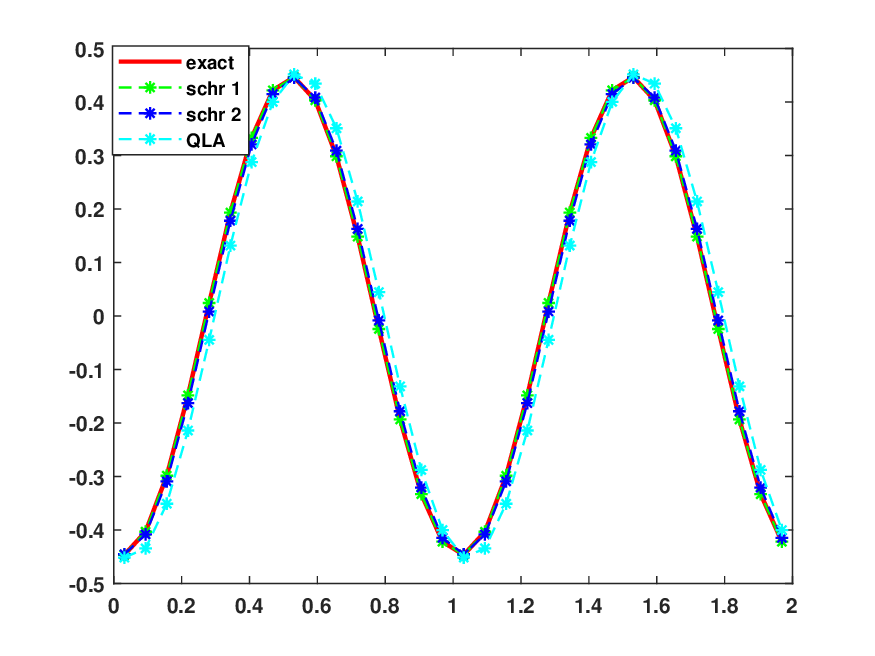}\label{fig:By_fixX}}
	 			\subfigure[$B_y(x,y^*)$]{
	 				\includegraphics[width=0.3\linewidth]{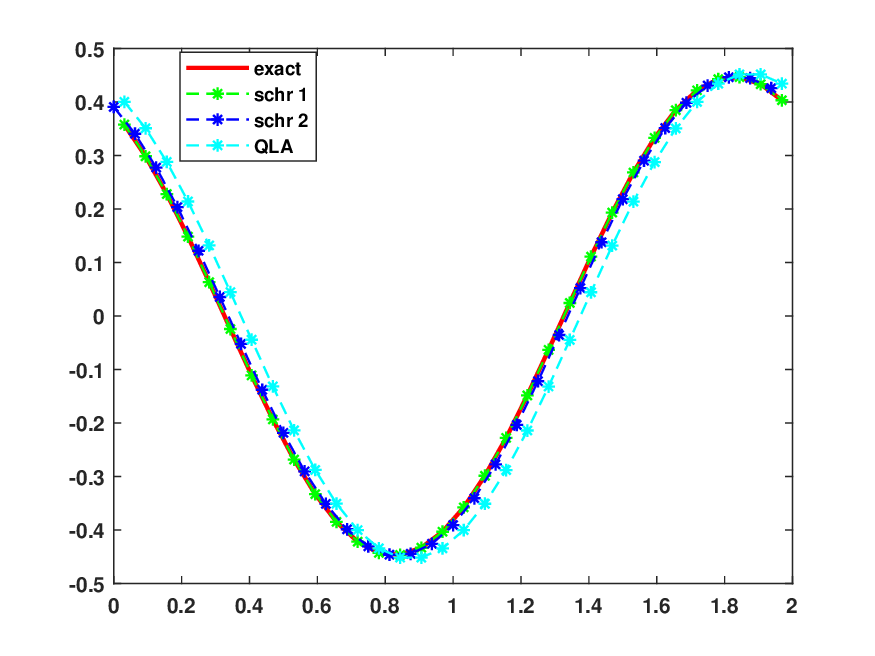}\label{fig:By_fixY}}
			 	\caption{Electromagnetic fields at $T=1$ with periodic boundary condition.
			 	On the left are the computed solutions with schr1, in the middle and right
		 	    are the computed and the exact solution with
               $x^*=39/32$ and $y^*=7/32$.}\label{fig:periodic}
			 \end{figure}
	
    \begin{table}[htp]
		  \centering
		 \begin{tabular}{cccccc}
		  \toprule [2pt]
		        & $\triangle \Ce$ &$\triangle (\nabla_h\cdot \BB_h)$ &$\Cf_{4,h}$  &$\Cf_{8,h}$ &$\text{err}_{EB}$\\
		  \midrule [2pt]
    QLA         &1.16e-4           & -       &3.71e-3   &3.72e-3   &1.53e-1\\
    schr 1      & 1.33e-15         & -       &9.72e-16 &9.70e-16 &3.72e-15\\
    schr 2      &4.44e-16          &6.88e-14 &-        &-        &3.83e-2\\
		  \bottomrule [2pt]
	     \end{tabular}
        \caption{Comparation of different schemes.}\label{tab}
		 \end{table}
	   Simulations of Equation~\eqref{eq:maxwell matrix}  are denoted by schr 1, 
     and  model \eqref{eq:maxwell} using the Yee's algorithm is called schr 2.
 	In Fig.~\ref{fig:periodic}, the comparison of $\text{schr 1}$, schr 2 and QLA proposed in \cite{va20,va202} shows that they are very close to the exact solution, and schr 1 ends up superior. 
    Define the discrete energy as 
    \begin{equation*}
        \Ce(T) =\sum_{\Bj} (E_{z,\Bj}^2+B_{x,\Bj}^2+B_{y,\Bj}^2)\triangle x^2.
    \end{equation*}
    From Table \ref{tab}, we find that the values of $\triangle \Ce =|\Ce(T)-\Ce(0)|$ and $\triangle (\nabla_h \cdot \BB_h)=|\nabla_h\cdot  \BB_h(t) - \nabla_h \cdot \BB_h(0)|$  are close to zero for schr 2 even when  $\text{err}_{EB}$ is much bigger, 
    where $\text{err}_{EB}$ is the error between the numerical and exact electromagnetic fields denoted by 
    \begin{equation*}
    \text{err}_{EB}=\max_{\Bj} \big(|E_{z,\Bj}-E_z(\Bx_{\Bj},T)|\; |B_{x,\Bj}-B_x(\Bx_{\Bj},T)|\; |B_{y,\Bj}-B_y(\Bx_{\Bj},T)|\big).
    \end{equation*}
    Let $\Cf_{4,h}$ denote the approximation of $\Cf_4$, which is the error of the Maxwell-Thomson equation. We find $\|\Cf_{4,h}\|_{l^{\infty}} \sim  \text{err}_{EB}$ for schr 1 and QLA.
    
		 \subsection{Physical boundary conditions with source terms}
			For simplicity of the exposition, we restrict ourselves to a reduced version of the Maxwell equations with one spatial variable, $x$,  namely 
			Equation~\eqref{eq:1DEB}. Moreover we assume $\sqrt{\varepsilon} = \sqrt{\mu} =1$.
			Let us consider a 1D case in $[0,L]$ with $L=15$, and use the exact solution to test the accuracy of algorithm.
			The cell number is set by $M =2^6$, $N= 2^{7}$.
			The simulation stops at $T=1$.	
	        The perfect conductor boundary condition for the TE model reads as
				\begin{equation}
					E_y(0)=0,\quad E_y(L)=0 .
				\end{equation}
			   In order to enforce the boundary condition, we set the exact solution as
			   \begin{equation}
			   	E_x = \sin(2\pi(x+t)/5),\quad 
               E_y = \frac{2\pi}{5}(\cos(2\pi x/5)/(2\pi)-1),\quad
			   	B_z = t\sin (2\pi x/5).
			   \end{equation}
		   Using the same method in \eqref{eq:pc BD Psi}, we get the perfect conductor boundary 
		   condition on the right side of the domain for Equation~\eqref{eq:tilde psi}
		   \begin{equation}\label{eq:PE left}
		   	\begin{bmatrix}
		   		\tilde \psi_1\\
		   		\tilde \psi_3\\
		   		\tilde \psi_5\\
		   		\tilde \psi_7
		   	\end{bmatrix}
	   	= \frac{1}{2 \sqrt{2}}\begin{bmatrix}
	   		1 &-1  &1  &1\\
	   		1 &-1 &-1 &-1\\
	   		1 & 1  &1 &-1\\
	   		-1 &-1 & 1 &-1
	   	\end{bmatrix}
   	\begin{bmatrix}
   		\tilde \psi_0\\
   		\tilde \psi_2\\
   		\tilde \psi_4\\
   		\tilde \psi_6
   	\end{bmatrix}
   =B_{O2E}\begin{bmatrix}
   	\tilde \psi_1\\
   	\tilde \psi_3\\
   	\tilde \psi_5\\
   	\tilde \psi_7
   \end{bmatrix}.
		   \end{equation}
			Changing the matrix $B_{O2E}$ in \eqref{eq:AD IMPE}, one gets the system for the Hamiltonian simulation with the perfect conductor boundary of the domain.
			The results are shown in  Fig.~\ref{fig:PEC_FDTD}.
			 \begin{figure}[htbp]
			\centering
			\subfigure[$E_x$]{
				\includegraphics[width=0.309\linewidth]{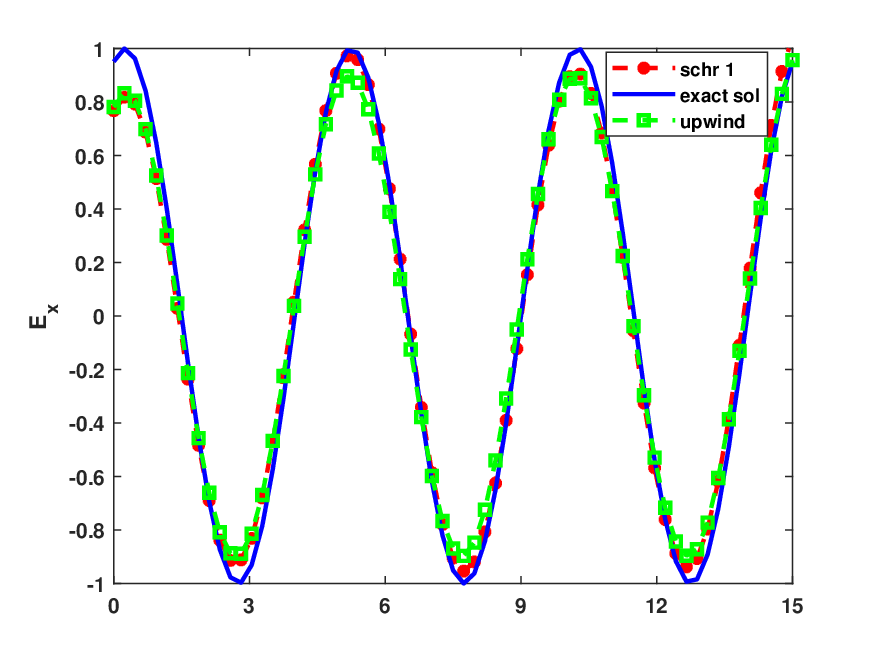}\label{fig:Ex_pec_upwind}}
			\subfigure[$E_y$]{
				\includegraphics[width=0.309\linewidth]{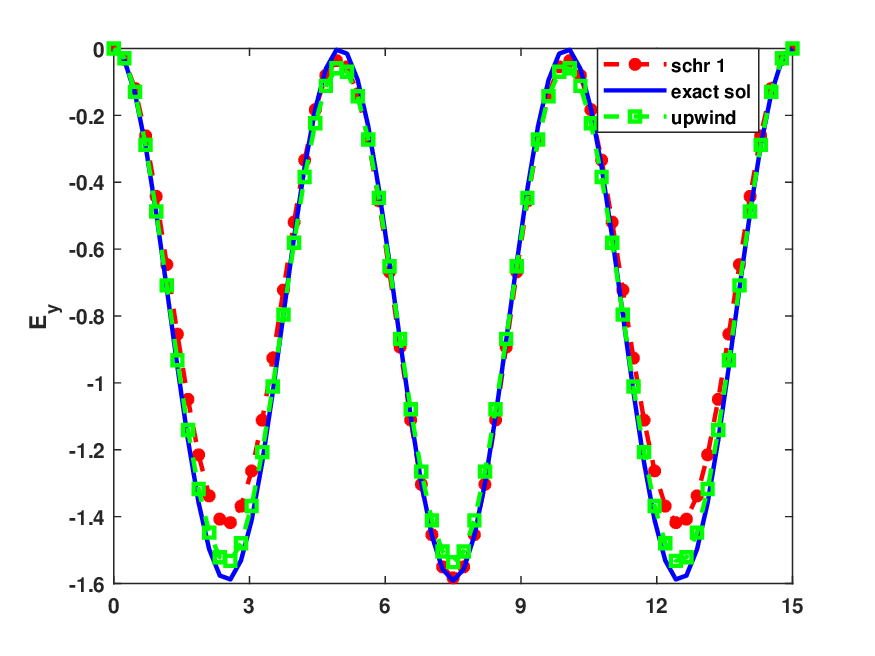}\label{fig:Ey_pec_upwind}}
			\subfigure[$B_z$]{
				\includegraphics[width=0.309\linewidth]{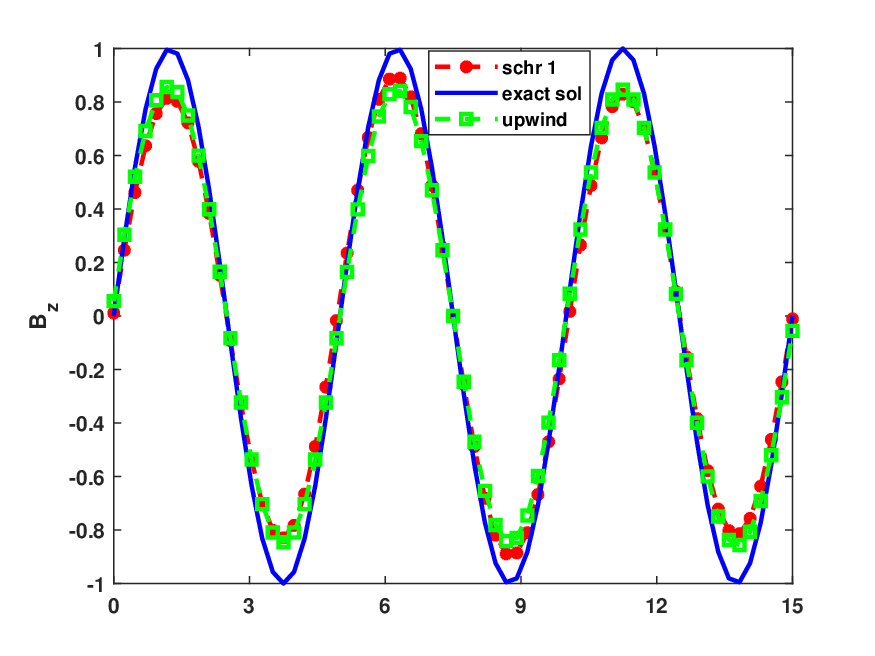}\label{fig:Bz_pec_upwind}}
			\subfigure[$E_x$]{
				\includegraphics[width=0.309\linewidth]{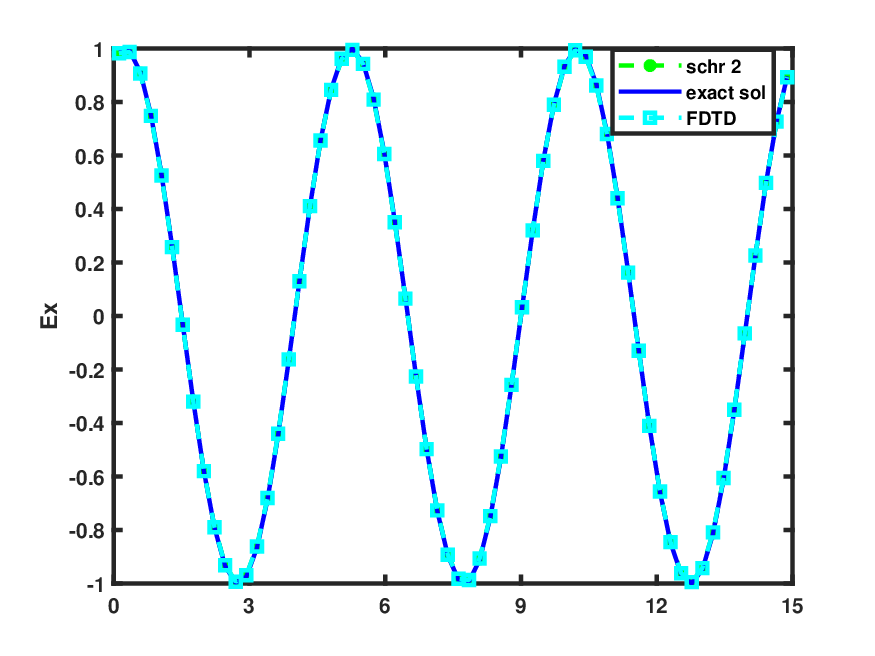}\label{fig:Ex_pec_FDTD}}
			\subfigure[$E_y$]{
				\includegraphics[width=0.309\linewidth]{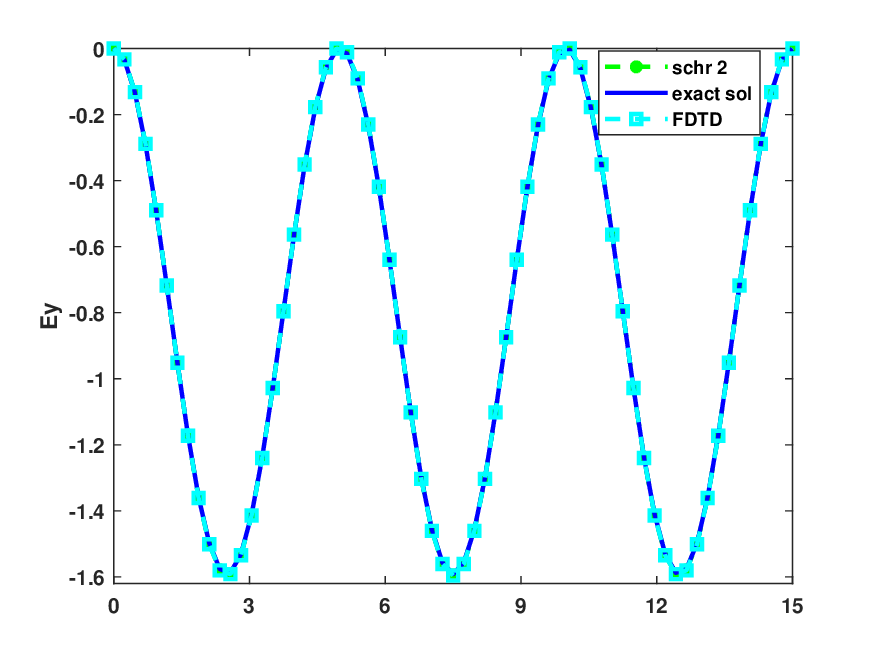}\label{fig:Ey_pec_FDTD}}
			\subfigure[$B_z$]{
				\includegraphics[width=0.309\linewidth]{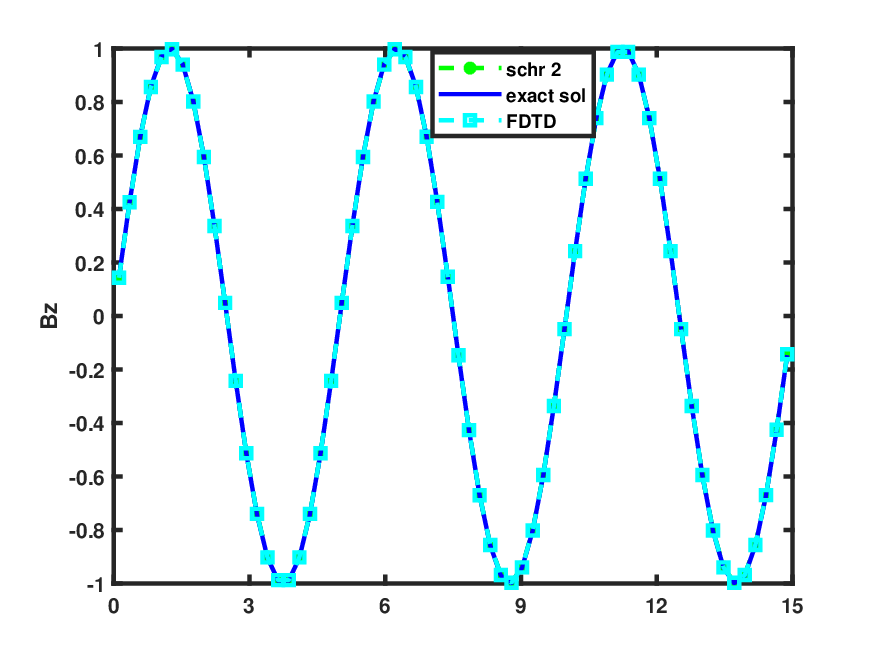}\label{fig:Bz_pec_FDTD}}
			\caption{Results for $T=1$ with  perfect conductor boundaries. The first row  are the simulations using upwind algorithm and the second row using Yee's algorithm.}\label{fig:PEC_FDTD}
		\end{figure}

	         In the 1D case, the impedance boundary condition reads
	         \begin{equation}
	         	E_y(0)+vB_z(0)=0,\quad
	         	vB_z(L)-E_y(L)=0.
	         \end{equation}
           We set the source term and the initial value carefully such that the solution satisfies 
            \begin{equation}
            	E_x = -\sin(\pi(x+t)/5),\quad
            	E_y = 5\cos(\pi x/5)/\pi,\quad
            	B_z = t\sin(\pi x/5) - 5/\pi.
            \end{equation}
           Similar to \eqref{eq:PE left}, one gets the left impedance boundary condition of the domain for Equation~\eqref{eq:tilde psi},
           \begin{equation}
           		\begin{bmatrix}
           			\tilde{\psi}_0\\
           			\tilde{\psi}_2\\
           			\tilde{\psi}_4\\
           			\tilde{\psi}_6
           		\end{bmatrix}
           	=\frac{1}{2\sqrt{2}}\begin{bmatrix}
           		1 &1  &-\frac{v-1}{v+1}  &\frac{v-1}{v+1}\\
           		-1 &-1  &-\frac{v-1}{v+1}  &\frac{v-1}{v+1}\\
           		-\frac{v-1}{v+1}  &\frac{v-1}{v+1}  &1 &1\\
           		-\frac{v-1}{v+1}  &\frac{v-1}{v+1}  &-1 &-1\\
           		\end{bmatrix}
           	\begin{bmatrix}
           		\tilde{\psi}_1\\
           		\tilde{\psi}_3\\
           		\tilde{\psi}_5\\
           		\tilde{\psi}_7
           	\end{bmatrix}
           =B_{E2O}           	\begin{bmatrix}
           	\tilde{\psi}_1\\
           	\tilde{\psi}_3\\
           	\tilde{\psi}_5\\
           	\tilde{\psi}_7
           \end{bmatrix}.
           \end{equation}
          By replacing the matrix $B_{E2O}$ in \eqref{eq:AD IMPE}, one obtains the ODE system for the quantum simulation. 
          	\begin{figure}[htbp]
          	\centering
          	\subfigure[$E_x$]{
          		\includegraphics[width=0.309\linewidth]{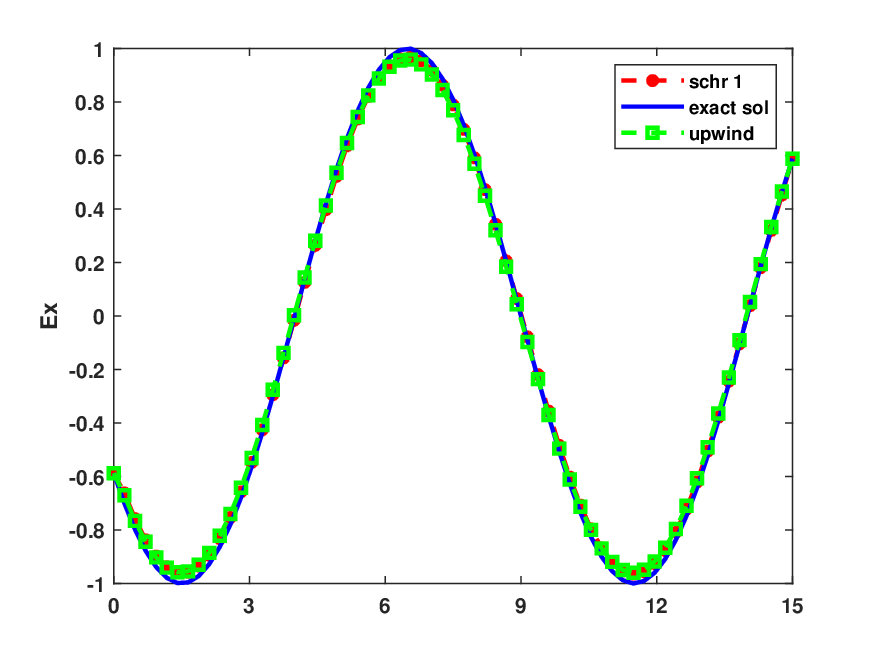}\label{fig:Ex_IM_upwind}}
          	\subfigure[$E_y$]{
          		\includegraphics[width=0.309\linewidth]{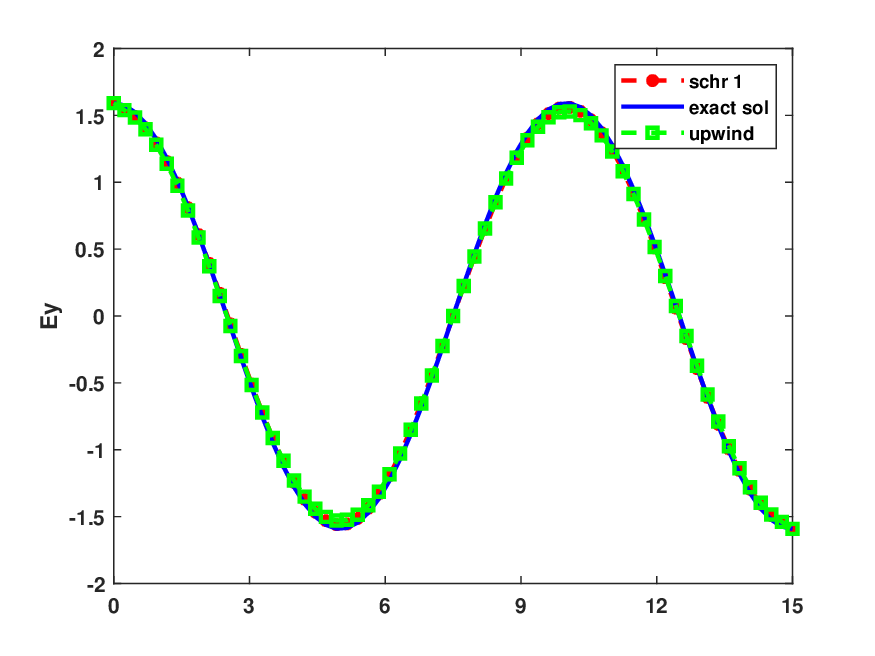}\label{fig:Ey_IM_upwind}}
          	\subfigure[$B_z$]{
          		\includegraphics[width=0.309\linewidth]{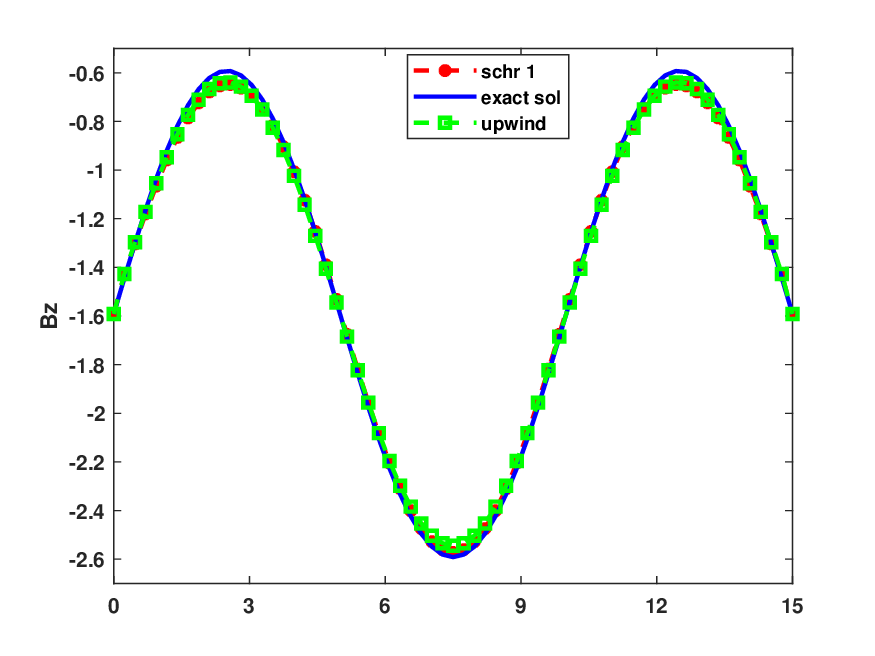}
            \label{fig:Bz_IM_upwind}}
          	\subfigure[$E_x$]{
          		\includegraphics[width=0.309\linewidth]{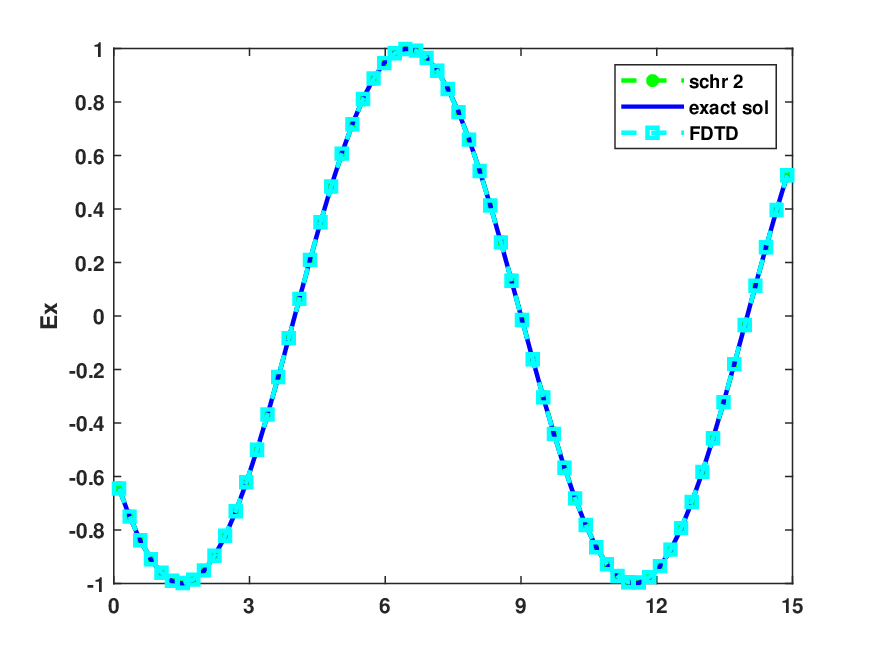}\label{fig:Ex_IM_FDTD}}
          	\subfigure[$E_y$]{
          		\includegraphics[width=0.309\linewidth]{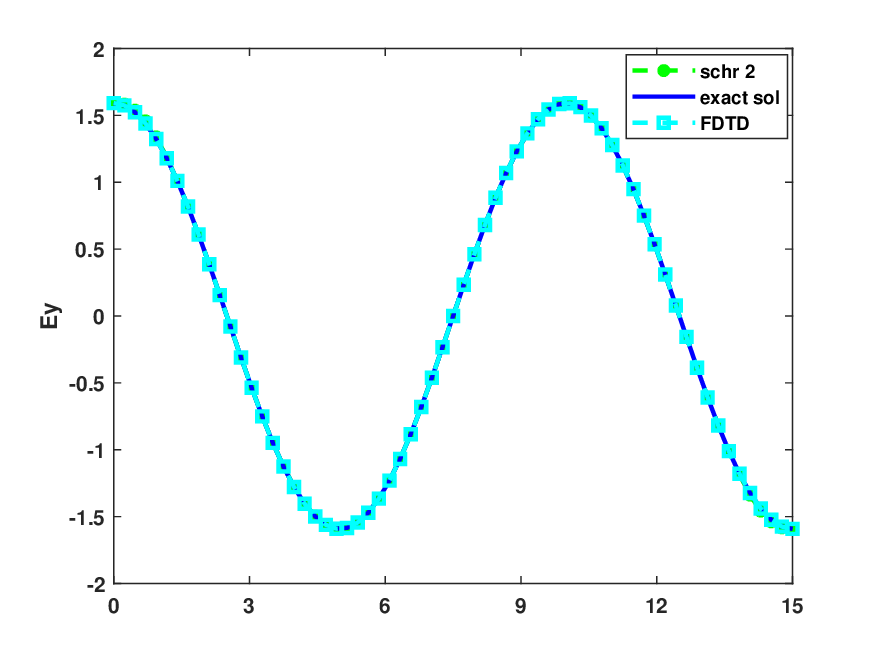}\label{fig:Ey_IM_FDTD}}
          	\subfigure[$B_z$]{
          		\includegraphics[width=0.309\linewidth]{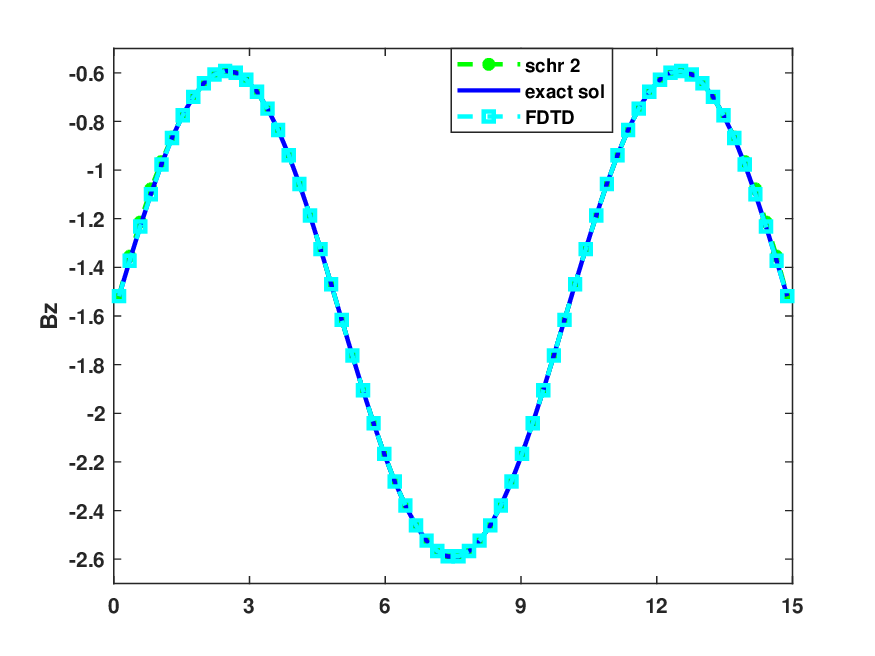}\label{fig:Bz_IM_FDTD}}
          	\caption{Results for $T=1$ with  impedance boundary conditions. The first row  are the simulations using upwind algorithm and the second row using Yee's algorithm.}\label{fig:IM_FDTD}
          \end{figure}
      
          Comparing with the figures in  Fig.~\ref{fig:PEC_FDTD} and Fig.~\ref{fig:IM_FDTD},  it can be seen that Schr$\ddot{\text{o}}$dingerisa-tion with both
          upwind algorithm and Yee's method can match the boundary conditions very well.
          Since the upwind scheme only has a first order accuracy of spatial discretization, the later is closer to the exact solution while using the same mesh size.
        \subsection{Simulations in inhomogeneous medium}
         Here, we simulate the evolution of a Gaussian pulse form a medium with $\varepsilon_1 =1$, $\mu_1=1$  to the other with $\varepsilon_2=3$, $\mu_2=1$, the discontinue parameter $\varepsilon$ and its approximation are shown in 
         Fig.~\ref{fig:dis_nx}. 
         The initial electromagnetic pulse is given by
         \begin{equation}
         	E_y(x,0)=B_z(x,0) = 0.01\exp\big[-\frac{20(x-3)^2}{100}\big].
         	\end{equation}
          In Fig.~\ref{fig:delta}, the first row are the results of Schr$\ddot{\text{o}}$dingerisation approach with $M=N=2^7$ and the second
          row are the approximation of QLA \cite{va20,va202} with fine mesh $M=2^{11}$,
          which shows that the numerical solutions from Schr$\ddot{\text{o}}$dingerisation
          method are in agreement with the ones of QLA .
          
    \begin{figure}[t]
	\centering
	\subfigure[$t=5$]{
		\includegraphics[width=0.309\linewidth]{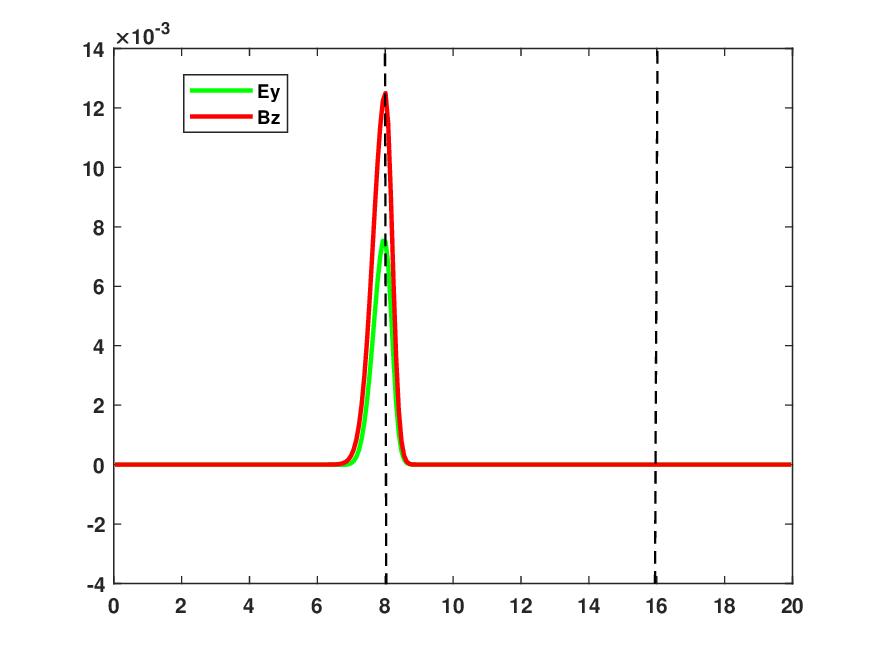}\label{fig:delta_Ep}}
	\subfigure[$t=10$]{
		\includegraphics[width=0.309\linewidth]{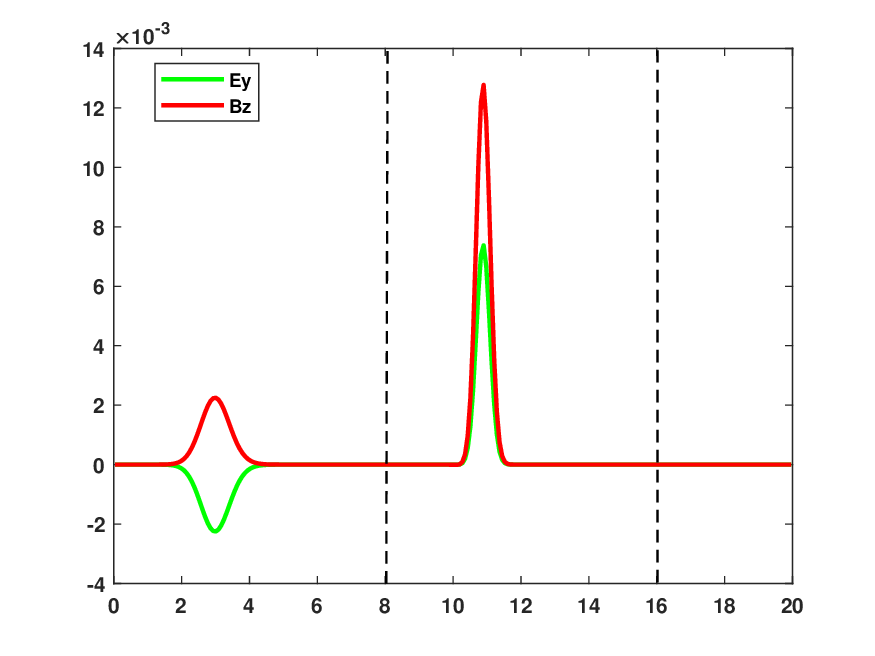}\label{fig:delta_Ep_T10}}
	\subfigure[$t=19.5$]{
		\includegraphics[width=0.309\linewidth]{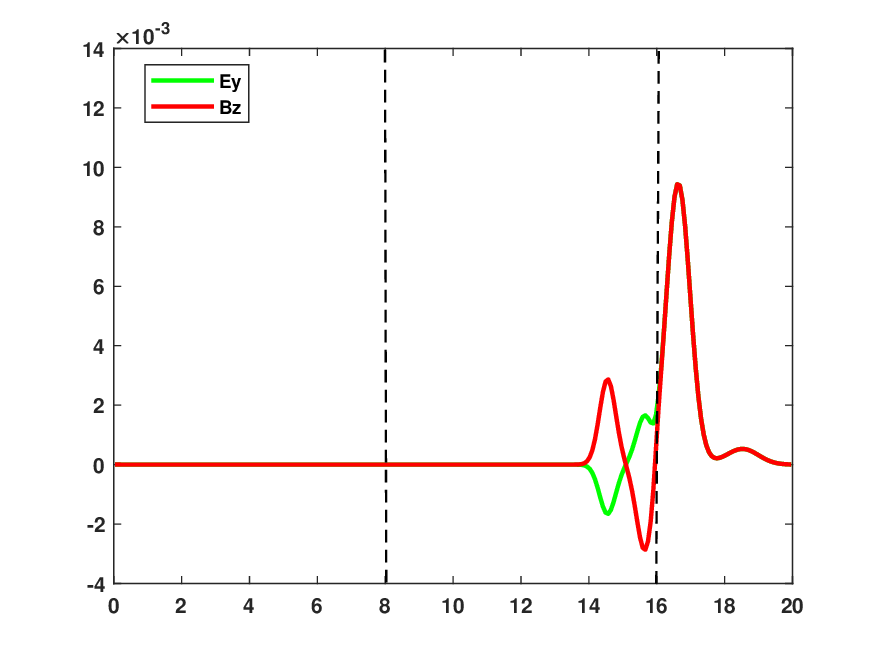}\label{fig:delta_Ep_T19.5}}
	\subfigure[$t=5$]{
		\includegraphics[width=0.309\linewidth]{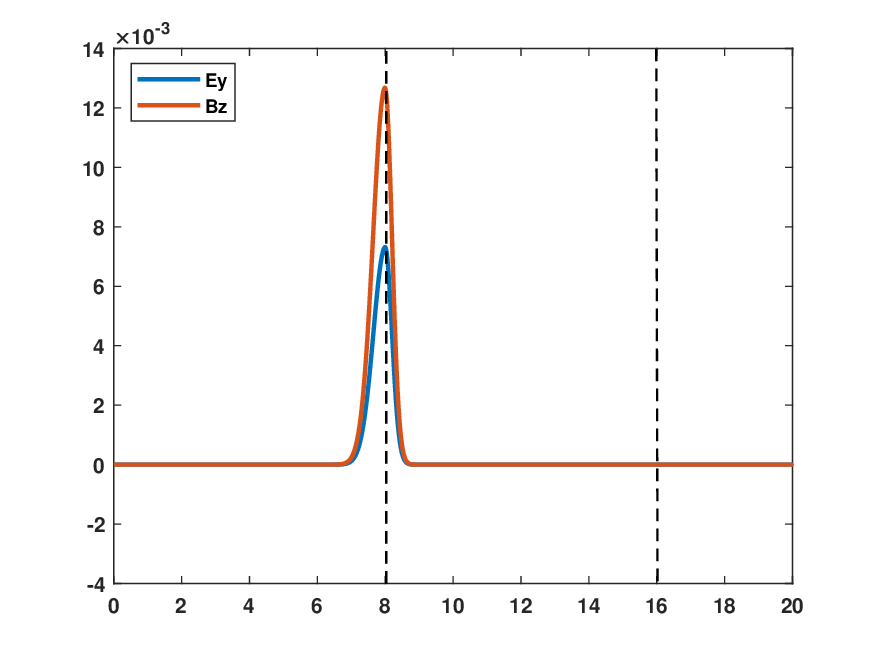}\label{fig:delta_QLA}}
	\subfigure[$t=10$]{
		\includegraphics[width=0.309\linewidth]{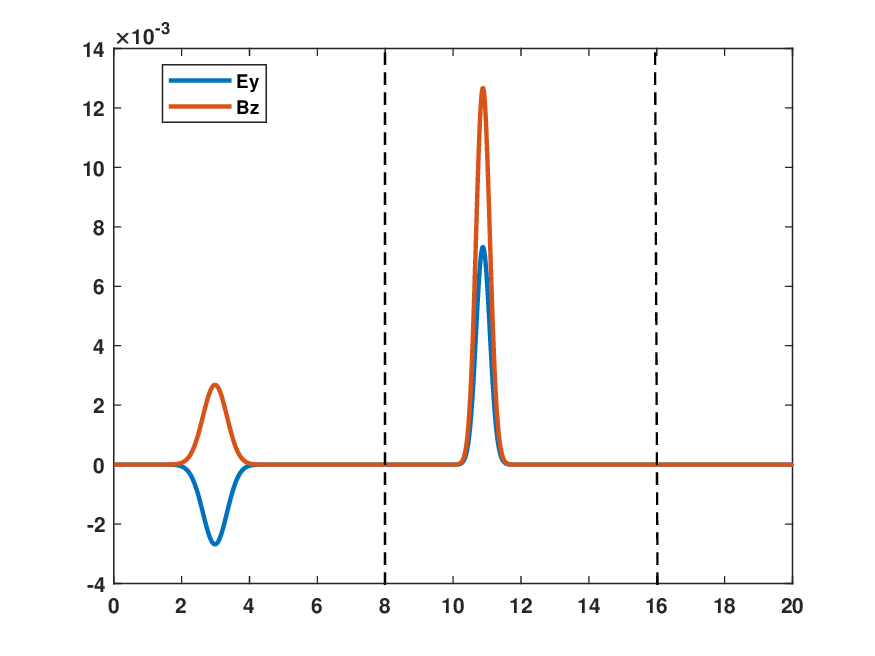}\label{fig:delta_QLA_T10}}
	\subfigure[$t=19.5$]{
		\includegraphics[width=0.309\linewidth]{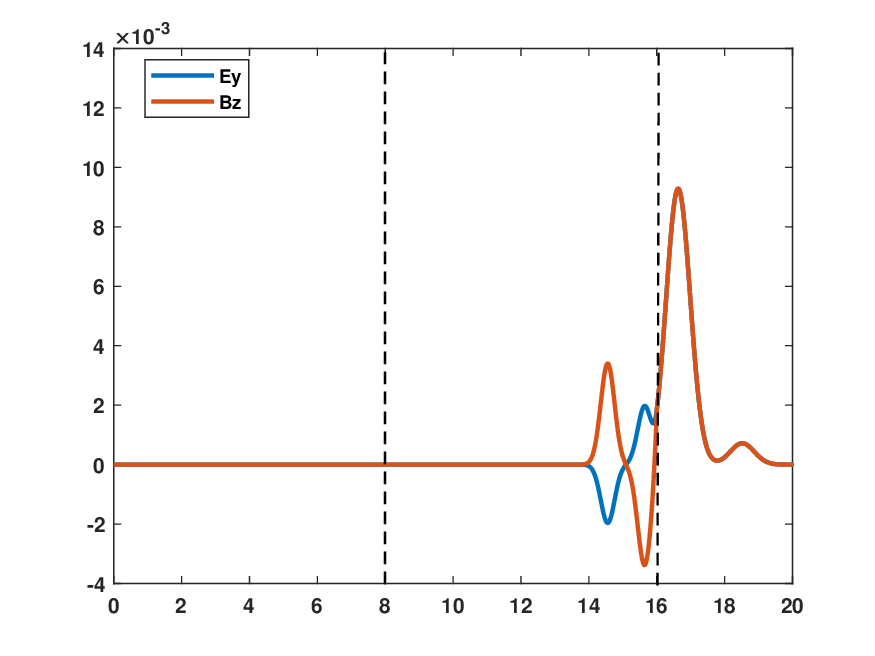}\label{fig:delta_QLA_T19}}
	\caption{Approximation of electromagnetic fields with schr 1 and QLA algorithm .}\label{fig:delta}
   \end{figure}

         Next we verify the accuracy of the Schr$\ddot{\text{o}}$dingerisation approach combined with the IIM method for Equation~\eqref{eq:tilde psi}  with the jump conditions in \eqref{eq:tilde jump condition} . The medium is divided into two parts--the medium on the left
         $(-9\leq x\leq 0)$ is vacuum ($\varepsilon_1=\mu_1=1$),  the right $(0\leq x\leq 10)$ is a dielectric with $\varepsilon_2=\mu_2=2$ \cite{Lor88}.
         The incident plane  is 
         \begin{equation}
         	E_{y,\text{inc}}=\exp{[i(\omega t-k_1 x)]}, \quad
            B_{z,\text{inc}} =\exp{[i(\omega t-k_1 x)]}/v_1,		
         \end{equation}
     where $\omega=0.5$, $v_1=1/\sqrt{\varepsilon_1 \mu_1}$, $k_i =\omega\sqrt{\varepsilon_i \mu_i}$ and $Z_i=\sqrt{\mu_i/\varepsilon_i}$, $i=1,2$. A reflective and transmitted wave 
     will be generated when the incident wave encounters the interface.
     The boundary conditions are chosen such that the exact solutions satisfy
     \begin{align*}
         E_y(x,t)&=\begin{cases}
             \exp{[i(\omega t-k_1x)]}+\frac{Z_2-Z_1}{Z_1+Z2}\exp{[i(\omega t+k_1x)]} \quad x<0\\
             \frac{2 Z_2}{Z_1+Z_2} \exp{[i(\omega t-k_2 x)]}  \quad x>0,
         \end{cases}\\
         B_z(x,t)&=\begin{cases}
             \frac{1}{v_1}\exp{[i(\omega t-k_1x)]}
             -\frac{Z_2-Z_1}{v_1(Z_1+Z2)}\exp{[i(\omega t+k_1x)]} \quad x<0,\\
             \frac{2\mu_2}{Z_1+Z_2} \exp{[i(\omega t-k_2 x)]} \quad
             x>0,
         \end{cases}
     \end{align*}
     The mesh size is set by $M=N=2^7$. As shown in Fig.~\ref{fig:dis_EB}, the Schr$\ddot{\text{o}}$dingerisation method captures the jump conditions across the material interface well, and its approximation is close to the exact solution.
\begin{figure}[t]
	\centering
	\subfigure[$E_y$]{
		\includegraphics[width=0.319\linewidth]{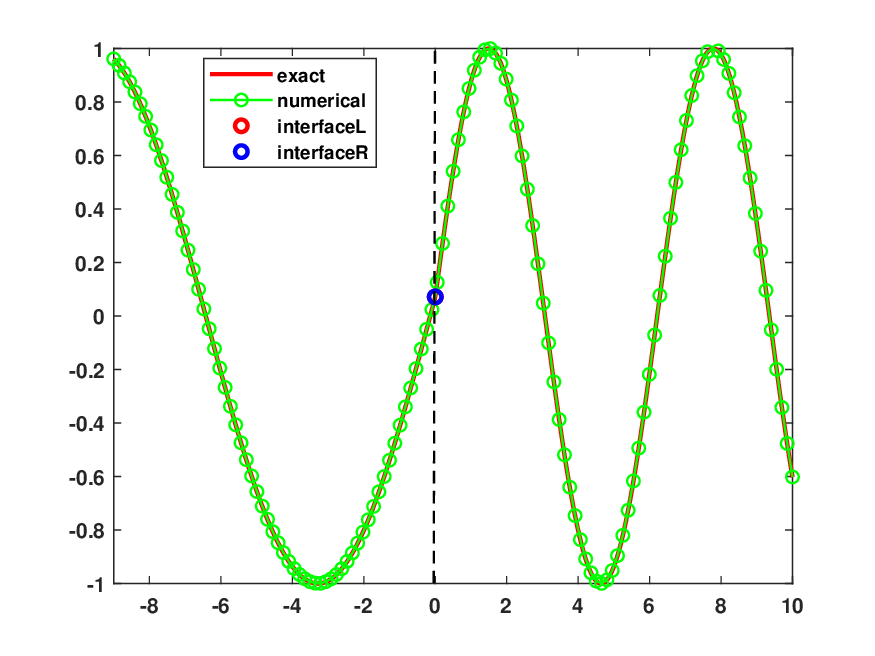}\label{fig:Ez_real}
		\includegraphics[width=0.319\linewidth]{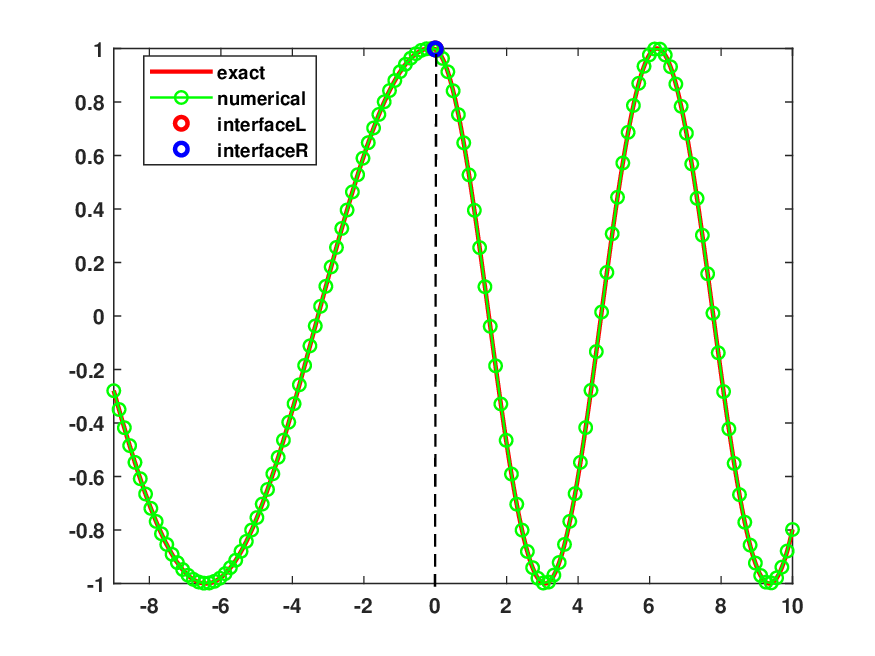}\label{fig:Ez_imag}}
	\subfigure[$B_z$]{
	    \includegraphics[width=0.319\linewidth]{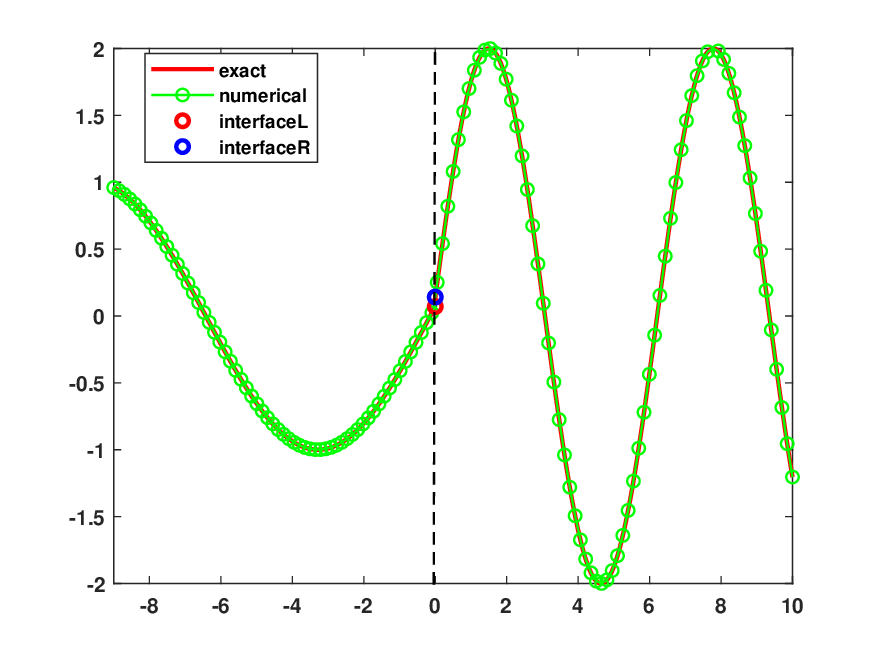}\label{fig:Bz_real}
        \includegraphics[width=0.319\linewidth]{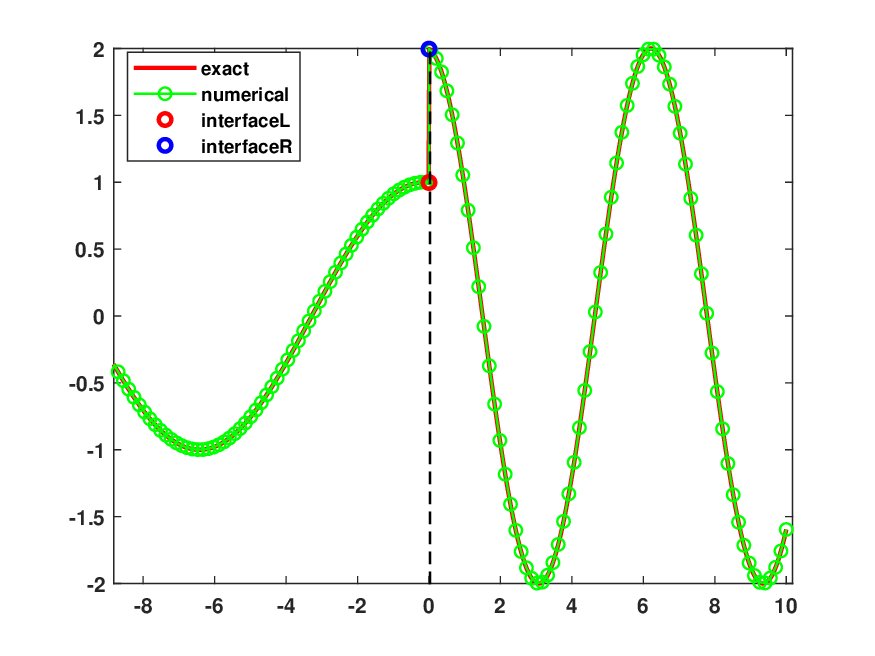}\label{fig:Bz_imag}}
	\caption{Electromagnetic fields at $T=3$.}
	\label{fig:dis_EB}
\end{figure} 

\section{Conclusions}
In this paper, we propose quantum algorithms for Maxwell's equations, using the Schr$\ddot{\text{o}}$dingerisation method introduced in \cite{JLY22a,JLY22b}. The proposed method has been demonstrated for the eight-dimensional matrix representation of Maxwell's equations based on the Riemann-Silberstein  vectors, and to electromagnetic models  based on  electric and magnetic fields. While source terms and physical boundary conditions are natural in simulations, quantum simulation incorporating these conditions are difficult due to the lack of unitarity of these systems.  We give implementation details for three physical boundary conditions, including periodic, perfect conductor and impedance boundary conditions. In addition, we simulate Maxwell's equations in a linear inhomogeneous medium with interface conditions.  Finally, we touch upon continuous-variable quantum systems to simulate Maxwell's equations via Schr$\ddot{\text{o}}$dingerisation.

 We did not consider the treatment of quantum simulations
algorithms for Maxwell's equations in unbounded domains. 
As pointed out in \cite{JLLY22}, the Schr$\ddot{\text{o}}$dingerisation  method can be applied to quantum dynamics with artificial boundary conditions. These will be the subject of our future research. 

\section*{Acknowledgement}

SJ was partially supported by the NSFC grant No. 120310-13, 
the Shanghai Municipal Science
and Technology Major Project (2021SHZDZX010-2), and the Innovation Program of Shanghai Municipal Education Commission (No. 2021-01-07-00-02-E00087). NL acknowledges funding from the Science and Technology Program of Shanghai, China (21JC1402900).
Both SJ and NL are also supported by the Fundamental Research Funds for the Central
Universities.
CM was partially supported
by China Postdoctoral Science Foundation (No. 2023M732248).


			\end{document}